\newtheorem{theorem}{Theorem} 
\newtheorem{lemma}[theorem]{Lemma}
\newtheorem{theorem*}{Theorem}
\newtheorem{observation}[theorem]{Observation}
\newcommand{\junk}[1]{}
\begin{document}
\setcopyright{acmcopyright}

\title{Asymptotically Optimal Approximation Algorithms for Coflow Scheduling}
\numberofauthors{3}
\author{
\alignauthor
Hamidreza Jahanjou\thanks{All authors are with the College of Computer and Information Science, Northeastern University, Boston, MA 02115, USA. Email: $\{$hamid, erezk, rraj$\}$@ccs.neu.edu.  Supported in part by NSF grants Grant CSR-1217981 and CCF-1422715, and a Google Research Award.}
\alignauthor
Erez Kantor$^*$\\
\alignauthor
Rajmohan Rajaraman$^*$\\
}

\maketitle

\begin{abstract}
Many modern datacenter applications involve large-scale computations composed of multiple data flows that need to be completed over a shared set of distributed resources.  Such a computation completes when all of its flows complete.  A useful abstraction for modeling such scenarios is a {\em coflow}, which is a collection of flows (e.g., tasks, packets, data transmissions) that all share the same performance goal.  

In this paper, we present the first approximation algorithms for scheduling coflows over general network topologies with the objective of minimizing total weighted completion time.  We consider two different models for coflows based on the nature of individual flows: circuits, and packets.  We design constant-factor polynomial-time approximation algorithms for scheduling packet-based coflows with or without given flow paths, and circuit-based coflows with given flow paths.  Furthermore, we give an $O(\log n/\log \log n)$-approximation polynomial time algorithm for scheduling circuit-based coflows where flow paths are not given (here $n$ is the number of network edges).

We obtain our results by developing a general framework for coflow schedules, based on interval-indexed linear programs, which may extend to other coflow models and objective functions and may also yield improved approximation bounds for specific network scenarios.  We also present an experimental evaluation of our approach for circuit-based coflows that show a performance improvement of at least \%22 on average over competing heuristics.
\end{abstract}

\section{Introduction}\label{sec:intro}
In modern datacenter network applications, a large-scale computation over a big data set is often composed of multiple tasks or multiple data flows that need to be completed over a shared set of distributed resources.  Such a computation completes when all the data flows in the computation complete.  A useful abstraction for modeling such scenarios is that of a {\em coflow} \cite{ChowdhuryStoica2012}, which is a collection of flows (e.g., tasks, packets, data transmissions) that all share the same performance goal.  Coflows arise frequently in distributed computing and datacenters \cite{DBLP:conf/nsdi/Al-FaresRRHV10,ChowdhuryStoica2012,DBLP:conf/sigcomm/ChowdhuryZMJS11,DBLP:conf/sigcomm/HongCG12,DBLP:conf/spaa/QiuSZ15,DBLP:conf/sigcomm/ZatsDMBK12,rapier}. A prominent example is parallel data processing; in certain MapReduce computations, the reduce phase at a particular reducer can begin only after all the relevant data from the map phase has arrived at the reducer \cite{DBLP:journals/cacm/DeanG08}.  Similar examples occur in Dryad \cite{DBLP:conf/eurosys/IsardBYBF07}, and Spark \cite{DBLP:conf/hotcloud/ZahariaCFSS10}. 

In this paper, we study the problem of designing coflow schedules that minimize total weighted completion time.  We consider two different models for coflows based on the nature of individual flows: circuits and packets.

\begin{itemize}
\item {\bf Circuit-based coflows} a flow is a connection request (data transmission) from source to sink on a network.
\item {\bf Packet-based coflows} a flow is a packet that needs to be routed from source to sink on a network.
\end{itemize}

Before going into formal definitions, Figure \ref{fig:coflow_example} illustrates a circuit-based coflow scheduling scenario. Three coflows $A$, $B$, and $C$ require bandwidth assignment and scheduling on a triangle network with unit edge capacities. Coflow $A$ consists of two flows $A_1$ of size $2$ and $A_2$ of size $1$ whereas the other two have one flow each of size $1$. Three solutions are proposed. In (s1), each flow is given a bandwidth of $1/2$; all flows are scheduled to run in parallel. The total completion time is $4+2+4=10$. In (s2), coflow $A$ has the highest priority followed by coflow $B$ and $C$. The total completion time is $2+2+4=8$. But, it is possible to do even better by observing that flow $C$ can run at the same time as either flow $A_2$ or flow $B$; this gives rise to an optimal solution, shown in (s3), with a total completion time of $4+2+1=7$.

We now present the models and problem formulations, our main results and techniques, and a review of related work.

\begin{figure*}
\centering
\begin{tikzpicture}[scale=1]

\node[draw,circle,thick] (x) at (0,0) {x};
\node[draw,circle,thick] (y) at (1,1.73) {y};
\node[draw,circle,thick] (z) at (2,0) {z};

\draw[thick] (x)--(y) node[midway,right] {$1$};
\draw[thick] (y)--(z) node[midway,left] {$1$};
\draw[thick] (x)--(z) node[midway,above] {$1$};

\draw[->,green, very thick] (-0.5, 0) -- (0.5,1.73) arc (180:0:0.5) node[above,midway]{$\sigma(C)=2$} -- (2.5,0) node[right,midway]{$C$};

\draw[->,blue, very thick] (-1, 0) node[below]{$\sigma(A_1)=2$} -- (0,1.73) node[left,midway]{$A_1$};
\draw[->,blue, very thick] (0, -0.5) -- (2,-0.5) node[right]{$A_2$} node[below,midway]{$\sigma(A_2)=1$};

\draw[->, red, very thick] (0,-1) node[left]{$B$} -- (2,-1) node[midway, below]{$\sigma(B)=1$};
\node at (1,-2) {(N)};
\draw[fill, blue] (4,-0.3) rectangle (6,-0.1); \node[left] at (4,-0.2) {$b(A_1)$}; \node[right] at (6,-0.2) {$1/2$}; 
\draw[fill, blue] (4,0.2) rectangle (5,0.4); \node[left] at (4,0.3) {$b(A_2)$}; \node[right] at (5,0.3) {$1/2$}; 
\draw[fill, red] (4,0.7) rectangle (5,0.9); \node[left] at (4,0.8) {$b(B)$}; \node[right] at (5,0.8) {$1/2$}; 
\draw[fill, green] (4,1.2) rectangle (6,1.4); \node[left] at (4,1.3) {$b(C)$}; \node[right] at (6,1.3) {$1/2$}; 
\node at (5,-2) {(s1)};

\draw[>=latex,->] (4,-0.5) -- (6,-0.5) node[below,right] {time} node[midway, below]{$0\;\;\;\, 1\;\;\;\, 2\;\;\;\, 3\;\;\;\, 4$}; 
\draw (4,-0.5) -- (4,2); 
\draw[fill, blue] (8,-0.4) rectangle (9,0); \node[left] at (8,-0.2) {$b(A_1)$}; \node[right] at (9,-0.2) {$1$}; 
\draw[fill, blue] (8,0.1) rectangle (8.5,0.5); \node[left] at (8,0.3) {$b(A_2)$}; \node[right] at (8.5,0.3) {$1$}; 
\draw[fill, red] (8.5,0.6) rectangle (9,1); \node[left] at (8,0.8) {$b(B)$}; \node[right] at (9,0.8) {$1$}; 
\draw[fill, green] (9,1.1) rectangle (10,1.5); \node[left] at (8,1.3) {$b(C)$}; \node[right] at (10,1.3) {$1$}; 
\node at (9,-2) {(s2)};

\draw[>=latex,->] (8,-0.5) -- (10,-0.5) node[below,right] {time} node[midway, below]{$0\;\;\;\, 1\;\;\;\, 2\;\;\;\, 3\;\;\;\, 4$}; 
\draw (8,-0.5) -- (8,2); 
\draw[fill, blue] (13,-0.4) rectangle (14,0); \node[left] at (12,-0.2) {$b(A_1)$}; \node[right] at (14,-0.2) {$1$}; 
\draw[fill, blue] (12.5,0.1) rectangle (13,0.5); \node[left] at (12,0.3) {$b(A_2)$}; \node[right] at (13,0.3) {$1$}; 
\draw[fill, red] (12,0.6) rectangle (12.5,1); \node[left] at (12,0.8) {$b(B)$}; \node[right] at (12.5,0.8) {$1$}; 
\draw[fill, green] (12,1.1) rectangle (13,1.5); \node[left] at (12,1.3) {$b(C)$}; \node[right] at (13,1.3) {$1$}; 
\node at (13,-2) {(s3)};

\draw[>=latex,->] (12,-0.5) -- (14,-0.5) node[below,right] {time} node[midway, below]{$0\;\;\;\, 1\;\;\;\, 2\;\;\;\, 3\;\;\;\, 4$}; 
\draw (12,-0.5) -- (12,2); 
\end{tikzpicture}
\caption{Three possible bandwidth assignments for the set of coflows $\{A,B,C\}$. Coflow $A$ has two flows ($A_1$ of size $2$ and $A_2$ of size $1$); the other two have each one flow of size $1$. The objective is to minimize sum of completion times. (N) The network is a triangle where each edge has unit capacity. Flows are drawn around the edges along with their size. (s1) A bandwidth of $1/2$ is assigned to each flow; total completion time is $10$. (s2) Priority is given to coflow $A$ then to coflow $B$ and then to coflow $C$; total completion time is $8$. (s3) An optimal solution; total completion time is $7$.}\label{fig:coflow_example}
\end{figure*}
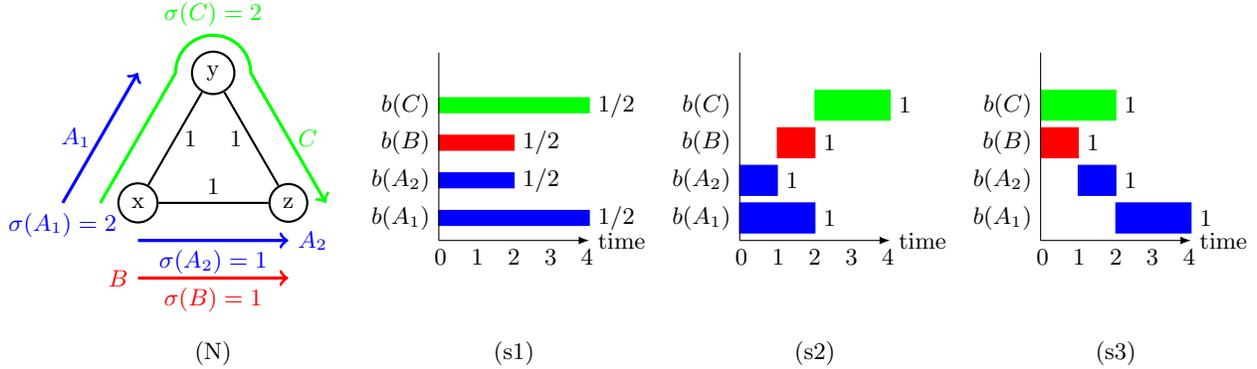

\subsection{Models and Notation}\label{sec:model}
Regardless of the model, the following notation is used. The set of all flows is denoted by $F=\{f_1, ..., f_n\}$. We define a \emph{coflow} $F_i$ to be a subset $\{f^i_1, ..., f^i_{n_i}\}$ of flows such that the set of all coflows, denoted by $\mathcal{F}$, provides a partition of $F$. Each flow $f_i$ has a source $s_i$, a destination $d_i$, and, depending on the model, a size $\sigma_i$. Additionally, each flow has a release time $r_i$ at which point it becomes available. Notice that in our formulation each flow has a release time whereas in previous works, release times are at the level of coflows.

We also have a network modeled as a directed graph $G=(V,E)$ which, in the case of circuit-based coflows, also features edge capacities $\{c(e)\}_{e\in E}$.

The completion time of a flow $f_i$ is denoted by $c_i$. The completion time $C_k$ of a coflow $F_k$ is defined to be the completion time of its last flow; that is,
\begin{equation*}
C_k = \max_{f\in F_k} c_{f}.
\end{equation*}

The goal is to minimize the weighted sum of coflow completion times 
\begin{equation}
\label{eq:obj} 
\overline{C}=\sum_k \omega_k C_k = \sum_k \omega_k \max_{f\in F_k} c_{f}.
\end{equation} 

Note that if there is only one coflow, the problem reduces to makespan minimization. Similarly, if every coflow has one flow, the problem reduces to minimizing total weighted (flow) completion time. Therefore, in a sense, coflow scheduling bridges two well-studied minimization objectives. In particular, this means that hardness results for either of these problems also hold for coflow scheduling. Below, we consider the particulars of each model.


\subsubsection*{Circuits}
In this model, a flow $f_i$ is a connection request (data transmission) from a source node $s_i$ to a destination node $d_i$ on a network $G=(V,E)$ with edge capacities $\{c(e)\}_{e\in E}$. A coflow is a set of connection requests. To each connection request in $F$, we need to assign a path $p_i$ (if not given already) and a bandwidth $b_i$ such that $\overline{C}$ is minimized without violating edge capacities. In general, the bandwidth of a connection request is an integrable function of time, $b_i(t)$, and its completion time $c_i$ is defined as the smallest value satisfying
\[
\int_{0}^{c_i} b_i(t)dt = \sigma_i,
\]
where $\sigma_i$ is the size of the connection request.

As we will see, one can assume, without loss of generality, that the bandwidth functions are piece-wise constant. We can further subdivide this problem into two cases: (a) paths are given for each connection request and we only need to assign bandwidths, (b) paths are not given; hence, each connection request requires routing and bandwidth assignment.

\subsubsection*{Packets}
In this case, each flow $f_i$ is a packet residing (at its release time $r_i$) on its source node $s_i$ waiting to be routed to its destination node $d_i \in V - \{s_i\}$. A coflow is a set of packets. Each link $e\in E$ can serve at most one packet at a time (i.e. edge capacities are set to 1). A coflow $F_k$ completes at time $C_k$ when all of its packets have reached their destination. As before, the goal is to minimize $\overline{C}$.

This model can be seen as a discrete version of circuit-based coflows. As in the previous model, we will consider two subproblems: (a) a path is given for each packet; only scheduling is needed
(b) paths are not given; both routing and scheduling are required for each packet.

\subsection{Our Results and Techniques}\label{sec:results}
We give asymptotically optimal  approximation algorithms for minimizing total
weighted completion time under the different models defined
above.

\begin{itemize}
\item 
We present $O(1)$-approximation algorithms for circuit-based coflows with given paths, packet-based coflows with given paths, and packet-based coflows where paths are not given.
\item
We present an $O(\log n/\log\log n)$-approximation algorithm for circuit-based coflows where paths are not given.
\item
We also present an experimental evaluation of our approach for circuit-based coflows which shows \%22 or more performance improvement on average over competing heuristics.
\end{itemize}
Note that, all of the above problems are at least NP-hard. Table \ref{tbl:results} summarizes our results.

\junk{
We give constant-ratio approximation algorithms for all cases except circuit-based coflows where paths are not given. 

As far as we know, circuit-based scheduling has not been investigated. We will show that circuit-based coflow scheduling with fixed paths is $\mathbf{NP}$-hard via a reduction from\\ $1|r_j,pmtn|\sum \omega_j C_j$. The variable-path case is harder. We show that the hardness of circuit-based coflow scheduling follows from hardness congestion minimization on directed graphs.

All of our results are asymptotically optimal. 
}

\begin{table}
\begin{center}
\begin{tabular}{cccc}
Model & Paths & Approx. & Hardness\\
\hline
\multicolumn{1}{|c}{\multirow{2}{*} {Packet-based}} & \multicolumn{1}{|c|}{given} & \multicolumn{1}{c|}{\multirow{3}{*}{$O(1)$}} & \multicolumn{1}{c|}{\multirow{2}{*}{$\mathbf{APX}$-hard}} \\ \cline{2-2}
\multicolumn{1}{|c}{}
 & \multicolumn{1}{|c|}{not given} & \multicolumn{1}{c|}{} & \multicolumn{1}{c|}{}\\ \cline{1-2}\cline{4-4}
\multicolumn{1}{|c}{\multirow{2}{*} {Circuit-based}} & \multicolumn{1}{|c|}{given} &\multicolumn{1}{c|}{} &\multicolumn{1}{c|}{$\mathbf{NP}$-hard} \\ \cline{2-4}
\multicolumn{1}{|c}{}
 & \multicolumn{1}{|c|}{not given} & \multicolumn{1}{c|}{$O(\frac{\log |E|}{\log\log |E|})$} & \multicolumn{1}{c|}{$\Omega(\frac{\log |E|}{\log\log |E|})$}\\
\hline
\end{tabular}
\caption{Approximation ratio of our algorithms and corresponding lower bounds for different models.}
\label{tbl:results}
\end{center}
\end{table}

A noteworthy contribution of our work is a common framework for developing coflow schedules, which applies to all the models we study. The framework consists of three parts. First, we reformulate the coflow scheduling problem as an instance of minimizing total weighted {\em flow} completion time problem with precedence constraints in form of depth-1 in-trees. Second, we devise an interval-index linear program for each problem instance, and then reduce the problem of finding flow schedules that minimize total weighted completion time (with aforementioned precedences) to multiple instances of the problem of finding flow schedules that minimize makespan. Third, we perform rounding.  The specifics of each step varies across models. In particular, the last step is achieved by a careful rounding of the associated linear program solution.  Though our approximation bounds are optimal to within constant factors, we have not attempted to optimize these constants.  We believe that our framework provides a general approach for solving coflow scheduling problems under diverse models and objectives. 


The algorithms given for circuit-based coflows provide the first provable approximation bounds, and apply in the general setting of arbitrary capacitated networks, arbitrary source-sink pairs, arbitrary demands, and arbitrary release times.  

To the best of our knowledge, packet-based coflows have not been studied earlier, though related models have been analyzed, as we discuss in previous work. 

\subsection{Previous Work}\label{sec:prev_work}
\subsubsection*{Coflows}
The notion of coflow was first proposed by Chowdhury and Stoica as {\em ``a networking abstraction to express the communication requirements of prevalent data parallel programming paradigms"} \cite{ChowdhuryStoica2012, DBLP:conf/sigcomm/ChowdhuryZMJS11}. In \cite{ChowdhuryZhongStoica2014}, Chowdhury et al. present effective heuristics for coflow scheduling without release times on a non-blocking switch network. They also show that coflow scheduling is $\mathbf{NP}$-hard for $P\times P$ switches, $P\geq 2$, via a reduction from the concurrent open shop scheduling problem. Approximation algorithms for coflow scheduling on a non-blocking switch with (coflow) release times are given in~\cite{DBLP:conf/spaa/QiuSZ15,KM-coflow-SPAA16,shafiee2017brief,Ahmadi2017}. Zhao et al. consider coflow scheduling over general network topologies; thereby, routing the flows becomes an additional requirement \cite{rapier}. They provide heuristics based on the observation that simultaneous routing and scheduling of coflows is required to obtain good performance.


\subsubsection*{Machine scheduling and task-based coflows}
\junk{
Machine scheduling problems are described \cite{Graham1979287} by three values $\alpha | \beta | \gamma$ where the first field represents the machine environment, the second field describes job constraints and the third field specifies the objective function to be minimized. The set of possible machine environments relevant to us is $\{ 1,P,R,J, O \}$ where $1$ denotes single machine, $P$ identical parallel machines, $R$ unrelated parallel machines, $J$ job shop, and $O$ open shop. Possible job constraints are $pmtn$ for preemption, $p_j=1$ for unit processing times and $r_j$ for release times. The objective function is either makespan $C_{\max}$ or weighted sum of completion times $\sum \omega_j C_j$. For more information, we refer the reader to one of the many surveys in this area (e.g. \cite{Hall:1996:AAS:241938.241939}).
}

Although coflow scheduling is a relatively recent development, the related problem of scheduling jobs consisting of tasks on parallel machines where a job completes when all of its tasks complete, has been studied before. This model is known in the literature as \emph{order scheduling}; the goal is to minimize the (weighted) sum of job completion times. In their survey, Leung et al., classify order scheduling problems into three categories based on capabilities of the machines \cite{Leung2005}. Specifically, in the fully-dedicated case, each machine can process only one type of component (task). This model, sometimes denoted by $G||\sum \omega_j C_j$, is also known as the concurrent open shop scheduling since, in contrast to the latter, components of a job can be processed at the same time \cite{DBLP:conf/fsttcs/GargKP07}. In the fully-flexible case, all machines can process all types of component (task). Lastly, in the arbitrary case, unrestricted subsets of machines can process arbitrary subsets of components (tasks). 

Machine scheduling problems are described \cite{Graham1979287} by three values $\alpha | \beta | \gamma$ where $\alpha$ represents the machine environment, $\beta$ describes job constraints and $\gamma$ specifies the objective function to be minimized. 

The task-based coflow scheduling model, where a flow is simply a task, corresponds to the fully-flexible case on (unrelated) parallel machines. For this model, Blocher and Chhajed observe that the problem is strongly $\mathbf{NP}$-complete even when all machines are identical \cite{NAV:NAV3}. Correra, Skutella and Verschae give a $13.5$ approximation algorithm for this problem on unrelated machines \cite{Correa2009}.


\subsubsection*{Circuit-based coflows}
Circuit-based coflow scheduling with fixed paths is a generalization of $1|r_j,pmtn|\sum \omega_j C_j$ which is strongly $\mathbf{NP}$-hard \cite{Labetoulle1984245}. On the other hand, the version with variable paths is related to congestion minimization. Chuzhoy et al. show that congestion minimization on directed graphs is hard to approximate within  $\Omega(\frac{\log |E|}{\log\log |E|})$ unless $\mathbf{NP}$ has $N^{O(\log\log N)}$-time randomized algorithms~\cite{DBLP:conf/stoc/ChuzhoyGKT07}.  

\subsubsection*{Packet-based coflows}
The problem of scheduling packets on store-and-forward networks with given paths is equivalent to the job shop scheduling problem with unit processing times. Leighton, Maggs and Rao \cite{Leighton94packetrouting} show that there exists a schedule achieving a makespan of $O(C+D)$ where $C$ and $D$ denote maximum edge congestion and maximum path dilation respectively. In \cite{DBLP:journals/combinatorica/LeightonMR99}, Leighton et al.\ give a polynomial-time algorithm for finding such a schedule. Peis et al. \cite{DBLP:conf/waoa/PeisSW09} show that for all $\epsilon > 0$ there is no $(6/5-\epsilon)$-approximation algorithm for the packet scheduling problem with given paths (minimizing makespan) unless $\mathbf{P}=\mathbf{NP}$. 
An $O(1)$-approximation algorithm for $J|r_{ij}, p_{ij}=1|\sum_S C_S$ is given in \cite{Queyranne02approximationalgorithms}; the authors consider a generalized minsum objective function.

On the other hand, when paths are not given, Srinvasan and Teo give the first constant-ratio approximation algorithm for packet routing and scheduling to minimize makespan \cite{TeoSrinv}. Koch et al. \cite{kochmessage} extend this result to messages which consist of packets having the same source and sink. 

The rest of this paper is organized as follows.  We study circuit-based coflows in \S\ref{sec:circuit_based} and packet-based coflows in \S\ref{sec:packet_based}. We present experimental results in~\S\ref{sec:experiments} and conclude in~\S\ref{sec:open_problems}.


%

\junk{

\section{Task-based coflows}\label{sec:task_based}
We use the relatively clean setting of parallel machines as an introduction to our coflow scheduling framework.
We give a $32$-approximation algorithm for the problem of scheduling task-based coflows. 
Note that our goal in this section is not to get the best constant for approximation ratio; indeed, a better result of $13.5$ is obtainable using a slightly different approach and tighter analysis \cite{Correa2009}.

{\bf Reformulation}.
We begin by formulating our problem as an instance of $R |prec, r_i| \sum_j \omega_j C_j$. Recall that the starting point is a set of machines $M=\{M_1,..., M_m\}$ and a set of jobs $\mathcal{F}=\{F_1,..., F_n\}$ (where $F_i$ consists of a set of tasks $\{f^i_1, ..., f^i_{n_i}\}$, along with a processing-time matrices $P_i = (p^i_{jk})_{m\times n_k}$, where $p^i_{jk}$ is the processing time of task $f^i_j$ on $M_k$, and a release time $r^i_j$ per task). 

Next, to capture coflow completion times, we (a) introduce a dummy task $f^i_0$ in each job $F_i$, (b) assign new weights $\omega'_{ij}$ to each task. The dummy task comes with precedence constraints from $f^i_j, j\neq 0$ to $f^i_0$, requiring that in each job, $f^i_0$ finishes last. In effect, the precedence graph is a forest of $n$ depth-1 in-trees. We set the weight to $\omega_i$ for the dummy task $f^i_0$  (i.e. $\omega'_{i0} = \omega_i$) and to zero for all other tasks (i.e. $\omega'_{ij}=0$, $j\neq 0$). Finally, we set the processing time of every dummy task to zero for all machine (that is, $p^i_{0k}=0$ for all $i$ and $k$) and set its release time to zero ($r^i_0=0$ for all $i$); the rest remain unchanged. The objective function to be minimized is $\sum_i\sum_j \omega'_{ij} c^i_j$.


{\bf The linear program}.
In order to solve this particular $R |prec, r_i| \sum_j \omega_j C_j$ instance, we use the standard technique of interval-indexed linear programs; the intervals are constructed by cutting the time line at geometrically increasing points. Since the ratio of the endpoints of each interval is a constant, we need not worry about the exact completion time of a task in an interval. More precisely, let $\tau_0=1$ and $\tau_\ell = 2^{\ell -1}$ for $\ell \in [L]$, where $L$ is sufficiently large (say $\max_k \sum_i \sum_j p^i_{jk}$). In the LP below, $x^i_{jk\ell}$ indicates that $f^i_j$ runs on $M_k$ and completes in the $\ell$th interval $(\tau_{\ell}, \tau_{\ell+1}]$. For $\ell=0$, the interval is assumed to be $[0,1]$.

The LP is to minimize $\sum_{i\in [n]} \sum_{j=0}^{n_i} \omega'_{kj} c^i_j$ subject to
\begin{align}
\sum_{k\in [m], \ell\leq L} x^i_{jk\ell} & =\ 1 &\forall i,j \label{eq:task_LP_s}\\
c^i_j & \leq\  c^i_0 &\forall i,j\label{prec_const}\\
\sum_{k\in [m],\ell\leq L} \tau_{\ell} x^i_{jk\ell} &\leq c^i_j &\forall i,j \label{eq:mass}\\
\sum_{i\in [N]} \sum_{j=0}^{n_i}  p^i_{jk} \sum_{t \leq \ell} x^i_{jkt} & \leq\ \tau_{\ell+1} &\forall k, \ell \label{eq:interval_req} \\
p^i_{jk\ell} + r^i_j > \tau_{\ell+1} & \Rightarrow\ x^i_{jk\ell} = 0 &\forall i,j,k,\ell \label{eq:relase_const} \\
x^i_{jk\ell} &\geq\ 0 &\forall i,j,k,\ell \label{eq:task_LP_e}
\end{align}

Note that constraints (\ref{eq:relase_const}) make sure that release times are respected and constraints (\ref{prec_const}) ensure that the dummy task finishes last in each job.

\begin{lemma}
The optimal value of the linear program (\ref{eq:task_LP_s})-(\ref{eq:task_LP_e}) with $\omega'_{i0}=\omega_i$ and $\omega'_{ij}=0$, $j \neq 0$, is at least the optimal weighted sum of completion times of task-based coflows.
\end{lemma}
\begin{proof}
Consider any schedule of task-based coflows, giving rise to task completion times $\bar{c}^i_j$ and coflow completion times $\bar{C}_i$. Set interval variables $\bar{x}^i_{jk\ell}$ accordingly. Furthermore, set the completion time of the dummy flow to the coflow completion time: $\bar{c}^i_0=\bar{C}_i$.  It is easy to see that for any given machine $M_k$ and interval $\ell$, sum of running time of tasks which finish in the interval $(\tau_{\ell}, \tau_{\ell+1}]$ can not be more than $\tau_{\ell+1}$; hence, constraints (\ref{eq:interval_req}) are satisfied. The rest of the constraints are clearly satisfied as well. Consequently, $\langle \bar{x}^i_{jk\ell}, \bar{c}^i_j\rangle$ is feasible solution.
\end{proof}


{\bf Rounding}.
Let $\hat{S}=\langle \hat{c}^i_j, \hat{x}^i_{jk\ell}\rangle$ denote an optimal solution to the LP. We proceed by filtering \cite{DBLP:conf/stoc/LinV92} the solution and assigning tasks to time frames. Specifically, for each task $f^i_j$, we define its half-interval to be $(\tau_{h^i_j},\tau_{h^i_j+1}]$ such that
\[ h^i_j = \min\{\ell: \sum_{t \leq \ell}\sum_k \hat{x}^i_{jkt} \geq 1/2 \}\]
and set 
\[
     \bar{x}^i_{jk\ell}= 
\begin{cases}
    \hat{x}^i_{jk\ell} / (1-\sum_{t > h^i_j} \hat{x}^i_{jkt}),& \text{if } \ell \leq h^i_j\\
    0,              & \text{if } \ell > h^i_j
\end{cases}
\]

Note that after this filtering process, constraints (\ref{eq:task_LP_s}) and (\ref{eq:relase_const}) still hold; furthermore, each non-zero variable increases by a factor of at most two. 
Next, we assign each task $f^i_j$ to its half-interval. Let $T[\ell]$ denote the set of tasks assigned to the time frame $(\tau_{\ell}, \tau_{\ell+1}]$. At this point, we go over the sets $T[\ell]$, one by one in increasing order, and schedule the tasks. Observe that for each $\ell$, the set $\bar{S}_{\ell}=\{\bar{x}^i_{jkt} : f^i_j \in T[\ell], t \leq \ell\}$ constitutes a feasible solution to the following LP

\begin{align}
\sum_{k\in [m]} \sum_{t \leq \ell} x^i_{jkt} & =\  1 & \forall& f^i_j \in T[\ell]\\
\sum_{f^i_j \in F[\ell]}  p^i_{jk} \sum_{t \leq \ell} x^i_{jkt} & \leq\ \tau_{\ell+2} & \forall& k\\
p^i_{jk\ell} + r^i_j > \tau_{\ell+1} & \Rightarrow\ x^i_{jk\ell} = 0 &\forall& f^i_j \in T[\ell]\ \forall k\\
x^i_{jkt} &\geq\ 0 & \forall& f^i_j \in T[\ell]\ \forall k, t
\end{align}

Importantly, note that we have removed the precedence constraints (\ref{prec_const}); the reason being that the processing time of each dummy task is zero and in an interval we are not concerned with exact completion times. 

This linear program is almost identical to the one considered by Hall et al. in the context of scheduling to minimize weighted sum of completion times on unrelated parallel machines with release times \cite{hall+ssw:schedule}.


At this point, we apply the rounding technique of Shmoys and Tardos \cite{shmoys+t:assignment} and interpret the rounded solution as a partition $\{T_i[\ell]\}_{i\in[M]}$ of $T[\ell]$. Then, we schedule all tasks in $T_i[\ell]$ to run from $\tau_{\ell+3}$ to $\tau_{\ell+4}$. 

Finally, since for each task $f^i_j\in T[\ell]$, $\hat{c}^i_j\geq 2\tau_{\ell}$, the blowup in completion time is at most $2\tau_{\ell+4} / \tau_{\ell}=32$.

} %

\section{Circuit-based coflows}\label{sec:circuit_based}
We now consider our main model. In this setting, we have a network $G=(V,E)$ with edge capacities $\{c(e)\}_{e\in E}$. A coflow $F_i$ has a weight $\omega_i \in \mathbb{R}^{\geq 0}$ and consists of a number of connection requests (data transmissions) $f^i_j$. Each connection request has a source $s^i_j \in V$, a sink $d^i_j \in V$, a release time $r^i_j \in \mathbb{R}^{\geq 0}$ and a size $\sigma^i_j \in \mathbb{R}^{\geq 0}$. A solution must specify a path $p^i_j$ from source to sink (if not given), and a bandwidth  function $b^i_j$ for each connection request $f^i_j$. In general, the bandwidth $b^i_j(t)$ is an integrable function of time such that
\begin{align}\label{eq:continuous_completion}
\int_{r^i_j}^{c^i_j} b^i_j(t)dt &= \sigma^i_j,
\end{align}
where $c^i_j$, the completion time of $f^i_j$, is the smallest value for which the equality holds.
In addition, we require that edge capacities are respected. Hence, any solution must satisfy 
\begin{equation}\label{eq:flow_req}
\forall e \in E\ : \sum_{f^i_j\in P(e)} b^i_j(t) \leq c(e)
\end{equation}
for all $t$, where $P(e)$ is the set of paths that use edge $e$.

There are two natural versions of this problem. In one version, a path $p^i_j$ is given as a part of the input for each connection request $f^i_j$ and we need to assign a bandwidth $b^i_j$ to each request with the goal of minimizing total weighted (coflow) completion time $\overline{C}$ without violating edge capacities. Note that any network topology in which there is a unique path between pairs of vertices, e.g. trees or non-blocking switches, falls into this category. In another version, paths are not given; hence, assigning a (single) path to each connection request becomes an additional requirement.

The rest of this section is organized as follows. First, we present two key lemmas used in deriving our results. Next, in \S\ref{sec:circuit_based_paths_given}, we give a $17.6$-approximation algorithm for the circuit-based coflow problem with given paths. Finally, in \S\ref{sec:circuit_based_paths_not_given}, an $O(\frac{\log |E|}{\log\log |E|})$-approximation algorithm is presented for the case where paths are not given. 
\subsubsection*{Two lemmas}
The following two lemmas are simple but crucial. The first one (Lemma \ref{lem:const_bandw}) states that, without loss of generality, we can assume that bandwidth functions are piece-wise constant.
\begin{lemma}\label{lem:const_bandw}
Suppose that there exist a capacitated network $(G,c)$ and a set of feasible flows $\{f_1, ..., f_n\}$, where each flow $f_i$ is specified by a triple $(p_i,b_i(t), r_i)$ consisting of its path, bandwidth, and release time. Given two time points $T_1<T_2$ such that $r_i \leq T_1$ for all $i \in[n]$, there exists a set of feasible constant bandwidths $\{b'_i\}_{1\leq i\leq n}$ delivering the same amount as $\{b_i(t)\}_{1\leq i \leq n}$ for every flow over the time interval $[T_1,T_2]$.
\end{lemma}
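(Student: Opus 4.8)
The plan is to simply replace each bandwidth function by its time-average over the interval $[T_1,T_2]$ and verify that averaging preserves both the delivered volume and the capacity constraints. Concretely, for each flow $f_i$ define
\[
b'_i \;=\; \frac{1}{T_2-T_1}\int_{T_1}^{T_2} b_i(t)\,dt,
\]
a constant (and finite, since the $b_i$ are integrable and nonnegative). I will keep the paths $p_i$ unchanged, and — for the purposes of the interval $[T_1,T_2]$ — use the constant rate $b'_i$ in place of $b_i(t)$ on that interval.

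First I would check the delivery condition: by the very definition of $b'_i$,
\[
\int_{T_1}^{T_2} b'_i\,dt \;=\; (T_2-T_1)\,b'_i \;=\; \int_{T_1}^{T_2} b_i(t)\,dt,
\]
so each flow transports exactly the same amount over $[T_1,T_2]$ under the new schedule as under the old one. Since every release time satisfies $r_i \le T_1$, running $f_i$ at the constant rate $b'_i$ throughout $[T_1,T_2]$ violates no release constraint.

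Next I would verify feasibility, i.e.\ that \eqref{eq:flow_req} holds for the constant bandwidths. Fix an edge $e\in E$. Using linearity of the integral and then the fact that the original flows are feasible (so $\sum_{f_i\in P(e)} b_i(t)\le c(e)$ for every $t$, in particular for a.e.\ $t\in[T_1,T_2]$),
\[
\sum_{f_i\in P(e)} b'_i
=\frac{1}{T_2-T_1}\int_{T_1}^{T_2}\sum_{f_i\in P(e)} b_i(t)\,dt
\le \frac{1}{T_2-T_1}\int_{T_1}^{T_2} c(e)\,dt
= c(e).
\]
Hence the constant bandwidths $\{b'_i\}$ are feasible on $[T_1,T_2]$, which is exactly what the lemma asserts.

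There is essentially no hard step here: the only things to be careful about are (i) that the capacity constraint is assumed to hold pointwise (for all $t$), so it survives integration and division by $T_2-T_1$; (ii) that we are only replacing the bandwidth functions on the window $[T_1,T_2]$, not claiming the global bandwidth profile is constant; and (iii) minor measure-theoretic bookkeeping, which I would dispatch by noting the $b_i$ are integrable by hypothesis so all the integrals above are well defined and finite. The usefulness of the lemma — setting up geometrically spaced intervals on which bandwidths may be taken constant — is what the subsequent interval-indexed LP will exploit.
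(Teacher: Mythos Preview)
Your proposal is correct and follows essentially the same approach as the paper: define $b'_i$ as the time-average of $b_i(t)$ over $[T_1,T_2]$, note that delivered volume is preserved by construction, and check edge feasibility by integrating the pointwise capacity bound. The only cosmetic difference is that the paper phrases the feasibility check as a proof by contradiction, whereas you give the equivalent direct inequality.
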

\begin{proof}
Let $\sigma_i$ denote the amount of flow delivered on path $p_i$; that is, $\sigma_i=\int_{T_1}^{T_2} b_i(t)dt$. Define new bandwidths $b'_i=\sigma_i/(T_2-T_1)$. Clearly, the volume  delivered remains the same and release times are respected. Moreover, for any edge $e$, the sum of flow bandwidths passing through $e$ satisfies $\sum_{i\in P(e)} b_i' \leq c(e)$, where $P(e)$ is the set of paths that use $e$. Indeed, assume that there exists an edge $e'$ for which $\sum_{i\in P(e')} b_i' > c(e')$. Then, it follows that
\begin{align*} 
\sum_{i\in P(e')} b_i' &= \sum_{i\in P(e')} \sigma_i / (T_2-T_1) > c(e') \\
&\Rightarrow \sum_{i\in P(e')} \frac{\int_{T_1}^{T_2} b_i(t)dt}{T_2-T_1} > c(e')\\
&\Rightarrow \sum_{i\in P(e')} \int_{T_1}^{T_2} b_i(t)dt > (T_2-T_1)c(e')
\end{align*}
which is a contradiction since the total volume delivered can not be more than $(T_2-T_1)c(e')$.
\end{proof}

The second lemma (Lemma \ref{lem:disc_bandw}) shows that it is possible to \emph{discretize} the bandwidth usage among all flows over a given path.
\begin{lemma}\label{lem:disc_bandw}
Suppose that there exist a path $p$ in a capacitated network $(G,c)$ and a set of feasible flows $\{f_1, ..., f_n\}$ over this path, where each flow $f_i$ has a bandwidth $b_i(t)$ and a release time $r_i$. Given two time points $T_1<T_2$ such that $r_i \leq T_1$ for all $i \in[n]$, there exists a set of feasible bandwidths $\{b''_i\}_{1\leq i\leq n}$ such that at any point in time $t\in [T_1,T_2]$, there is at most one active flow (i.e. with non-zero bandwidth). Furthermore, for every flow, the delivery amount over the time interval $[T_1,T_2]$ is as before.
\end{lemma}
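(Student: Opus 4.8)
The plan is to run the flows one after another, each transmitting at the bottleneck bandwidth of the path. Write $c_p = \min_{e \in p} c(e)$ for the bottleneck capacity of $p$, and fix an edge $e^\star \in p$ attaining this minimum. For each $i$, let $\sigma_i = \int_{T_1}^{T_2} b_i(t)\,dt$ be the volume that $f_i$ delivers over $[T_1,T_2]$ under the given feasible flows.

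The one place the hypotheses enter is a volume bound. Since all of $f_1,\dots,f_n$ traverse $p$, they all traverse $e^\star$, so feasibility of the original bandwidths at $e^\star$ gives $\sum_{i=1}^n b_i(t) \le c(e^\star) = c_p$ for every $t$; integrating over $[T_1,T_2]$ yields $\sum_{i=1}^n \sigma_i \le c_p\,(T_2 - T_1)$. This is exactly what makes the sequential schedule fit in the window.

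Now I would construct $\{b''_i\}$ explicitly. Set $t_0 = T_1$ and $t_i = T_1 + \tfrac{1}{c_p}\sum_{j \le i} \sigma_j$ for $i = 1,\dots,n$; by the bound above, $t_n \le T_2$. Define $b''_i(t) = c_p$ for $t \in (t_{i-1}, t_i]$ and $b''_i(t) = 0$ elsewhere. Then at every instant at most one flow has nonzero bandwidth; each $f_i$ delivers $\int_{T_1}^{T_2} b''_i = c_p\,(t_i - t_{i-1}) = \sigma_i$, as required; release times are respected since no flow starts before $T_1 \ge r_i$; and capacities are respected because on any edge $e \in p$ the total bandwidth at any time is at most $c_p \le c(e)$, while edges not on $p$ carry none of these flows. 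The bandwidths are piece-wise constant, so this also stays within the regime used elsewhere.

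There is no real obstacle here — the argument is a short counting/packing observation — so I would keep the write-up to the two moves above: (i) the capacity-at-the-bottleneck inequality $\sum_i \sigma_i \le c_p(T_2 - T_1)$, and (ii) the explicit time-sharing schedule at rate $c_p$. The only thing to be mildly careful about is to invoke feasibility at the \emph{bottleneck} edge (not an arbitrary edge of $p$), since that is what bounds $\sum_i b_i(t)$ by $c_p$ rather than by a larger capacity.
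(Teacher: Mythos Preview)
Your proof is correct and follows essentially the same approach as the paper: define the bottleneck capacity $c_p$ (the paper writes $c_m$), compute each flow's delivered volume $\sigma_i$ over $[T_1,T_2]$, and time-share the path at rate $c_p$ in consecutive subintervals of length $\sigma_i/c_p$, using the feasibility inequality $\sum_i b_i(t)\le c_p$ to conclude the schedule fits within $[T_1,T_2]$. Your write-up is slightly more explicit about invoking feasibility at the bottleneck edge and about checking capacity on every edge of $p$, but the argument is the same.
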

\begin{proof}
Let $c_m = \min_{e\in p} c(e)$ denote the minimum edge capacity of the path. Also, let $\sigma_i$ denote the volume corresponding to $b_i$ delivered over the time interval; that is, $\sigma_i=\int_{T_1}^{T_2}b_i(t)dt$. 

The idea is to assign full bandwidth to flows one at a time. Set $t_0=0$ and $t_i=\sum_{k=0}^i \frac{\sigma_k}{c_m}$, for $i\in [n]$. Define the new bandwidths as follows. For $i \in [n]$,
\[ b''_i(t) =
\begin{cases}
    c_m,		& \text{if } t \in (T_1 + t_{i-1},T_1 + t_i]\\
    0,     & \text{otherwise}\\
\end{cases}
\]

It is obvious that $b''_i(t)$ is feasible and can be non-zero only in the interval $(T_1 + t_{i-1}, T_1 + t_i]$. Furthermore, this schedule does not require any additional time. Indeed,
\begin{align*} T_1 + t_n\ &=\ T_1 + \sum_{k=0}^n \frac{\sigma_k}{c_m} =\ T_1 + \sum_{k=0}^n \frac{\int_{T_1}^{T_2}b_k(t)dt}{c_m}\\
&= T_1 + \int_{T_1}^{T_2} \sum_{k=0}^n \frac{b_k(t)}{c_m} dt\ \leq\ T_1 + \int_{T_1}^{T_2} dt\ =\ T_2,
\end{align*}
where the inequality follows from the requirement that $\forall t : \sum_k b_k(t) \leq c_m$.
\end{proof}

\subsection{Paths Are Given}\label{sec:circuit_based_paths_given}
In this setting, each connection request comes with a path. We give a $17.6$-approximation algorithm for this problem. We begin with an observation. 
\begin{observation}
The circuit-based coflow scheduling with given paths is strongly $\mathbf{NP}$-complete.
\end{observation}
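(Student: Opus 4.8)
The plan is a reduction from $1|r_j,pmtn|\sum\omega_j C_j$, which is strongly NP-hard \cite{Labetoulle1984245}; together with membership in NP this gives strong NP-completeness. Given a scheduling instance with jobs $j$ of processing time $p_j$, release time $r_j$, and weight $\omega_j$, I would build the circuit-based coflow instance on the two-vertex graph with a single directed edge $e$ of capacity $c(e)=1$: for each job $j$ introduce a coflow $F_j$ of weight $\omega_j$ consisting of a single connection request from the tail of $e$ to its head, of size $\sigma_j=p_j$ and release time $r_j$, with the (forced) path being $e$. This reduction is polynomial and does not alter the numeric data, so it carries over strong hardness; what remains is to show that the two optima coincide.

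One inequality is immediate: from any preemptive single-machine schedule, setting $b_j(t)=1$ exactly when job $j$ is being processed gives a feasible bandwidth assignment (at most one job runs at a time, so the capacity-$1$ edge is respected), and since every coflow is a singleton its completion time equals the corresponding flow's completion time, so $\mathrm{OPT}_{\text{coflow}}\le \mathrm{OPT}_{\text{sched}}$. The reverse inequality is the heart of the argument, because a feasible coflow solution is an a priori more powerful ``fluid'' schedule in which several requests may simultaneously use the edge at fractional rates. Here I would take the completion times $c_j$ from such a solution and observe that for every window $t_1\le t_2$ the set of jobs with $t_1\le r_j$ and $c_j\le t_2$ satisfies $\sum p_j=\sum\int_{r_j}^{c_j}b_j(t)\,dt\le \int_{t_1}^{t_2}\sum_j b_j(t)\,dt\le c(e)(t_2-t_1)=t_2-t_1$, since each such $[r_j,c_j]\subseteq[t_1,t_2]$ and $\sum_j b_j\le c(e)=1$ pointwise. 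By the classical feasibility characterization for preemptive single-machine scheduling with release times and deadlines --- this exact interval condition is necessary and sufficient, and earliest-deadline-first (EDF) realizes it --- there is an integral preemptive schedule in which each job $j$ completes no later than $d_j:=c_j$, hence with objective at most $\sum_j\omega_j c_j$. This yields $\mathrm{OPT}_{\text{sched}}\le \mathrm{OPT}_{\text{coflow}}$, so the optima are equal and the coflow problem is strongly NP-hard.

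For membership in NP, I would argue that an optimal coflow solution admits a polynomial-size description: cut the time line at the at most $2n$ release and completion events of an optimal solution and apply Lemma \ref{lem:const_bandw} on each resulting segment to the requests that are live there; this produces an optimal solution whose bandwidth functions are piecewise constant with polynomially many pieces and polynomially bounded rational values, which serves as a certificate whose feasibility (capacity per edge per segment, delivery of each $\sigma^i_j$) and objective value are verified in polynomial time.

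The main obstacle I anticipate is the fluid-to-integral step: one has to be sure that fractional sharing of edges never beats integral (one-flow-at-a-time) processing on the single-edge instance, and I expect to dispose of it precisely by the window inequality plus the EDF feasibility theorem above. The only other point requiring care is the membership bookkeeping --- pinning down the number of breakpoints and the bit-length of the bandwidth values in a canonical optimal solution --- but this is routine given Lemma \ref{lem:const_bandw}.
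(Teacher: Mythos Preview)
Your reduction is exactly the paper's: the single-edge, unit-capacity network with one single-flow coflow per job. The difference is in the fluid-to-integral direction. The paper handles it with Lemma~\ref{lem:disc_bandw}: list the release and completion times of the coflow solution, and on each maximal interval between consecutive such events apply the lemma (all live requests share the one path $e$) to obtain, for that interval, an at-most-one-active-flow schedule delivering the same amounts. Concatenating these pieces yields a preemptive single-machine schedule with the very same completion times, no external theorem needed. Your route via the window inequality and the EDF feasibility characterization is correct and yields the same conclusion, but it imports a scheduling result that the paper avoids by using its own discretization lemma; conversely, your argument makes the ``fractional sharing never helps on one edge'' point explicit and quantitative, which some readers may find illuminating. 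Your treatment of NP membership via Lemma~\ref{lem:const_bandw} is also a bit more careful than the paper's, which essentially leaves membership implicit.
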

This result follows directly from the $\mathbf{NP}$-completeness of $1 | pmtn, r_i | \sum \omega_i c_i$ which seeks to minimize total weighted completion time on a single machine with preemption and release times \cite{Labetoulle1984245}. Indeed, given an instance of the scheduling problem, consider a single edge $e:$
\begin{tikzpicture}[every node/.style={draw,shape=circle,fill=black,minimum size=1mm,inner sep=0pt,outer sep=0pt}]{
\node[label=left:$s$] (a) at (0,0) {};
\node[label=right:$t$] (b) at (1,0) {};
\draw[->,>=latex] (a) to (b);}
\end{tikzpicture}
such that every job $j$ with a processing time of $p_j$ corresponds to a coflow consisting of just one flow of size $\sigma_j=p_j$ from $s$ to $t$ and with the corresponding release time. Now, Lemma \ref{lem:disc_bandw} states that we can turn a solution returned by solving the circuit-based coflow problem into a preemptive schedule: simply determine the order of release and completion times on the time line and apply the lemma to each interval.

We follow our general framework for coflow scheduling.
In order to get the best approximation ratio, we optimize over some parameters such as interval length.

{\bf Reformulation}.
We transform the problem of minimizing total weighted coflow completion time to an instance of minimizing total weighted flow completion time. To capture completion times at the level of coflows, we (a) introduce a dummy flow $f^i_0$ in each coflow $F_i$, and (b) assign new weights $\omega'_{ij}$ to each flow. In more detail, the dummy flow comes with precedence constraints from $f^i_j, j\neq 0$ to $f^i_0$, requiring that in each coflow, $f^i_0$ finishes last. In effect, the precedence graph is a forest of $n$ depth-1 in-trees. We set the weight to $\omega_i$ for the dummy flow $f^i_0$  (i.e. $\omega'_{i0} = \omega_i$) and to zero for all other flows (i.e. $\omega'_{ij}=0$, $j\neq 0$). Finally, note that dummy flows do not have source, destination or size but are required to finish no sooner than other flows in the same coflow. Now, the new objective function to be minimized is $\sum_i\sum_j \omega'_{ij} c^i_j$.

{\bf The linear program}.
In this step, we devise an interval-indexed linear program. The time line is divided into segments $[0,1]$, $(1, 1+\epsilon]$, $(1+\epsilon,(1+\epsilon)^2],$ ..., $(\tau_{\ell}, \tau_{\ell+1}]$ for $\ell\in \{0,1, ..., L\}$, where $\tau_{\ell}=(1+\epsilon)^{\ell-1}$, $\tau_0=0$, $\epsilon>0$ will be determined later, and L is a sufficiently large integer. 

In the linear program below, $x^i_{j\ell}$ is the portion of connection request $f^i_j$ completed in the $\ell^{\textnormal{th}}$ interval $(\tau_{\ell}, \tau_{\ell+1}]$ and $b^i_{j\ell}$ is the bandwidth of $p^i_j$ (i.e. $f^i_j$'s path) during the $\ell^{\textnormal{th}}$ interval. Notice that release times are assumed to occur at the start of intervals; we show that this restriction costs us no more than a factor of $1+\epsilon$. Constraints (\ref{circ_prec_const_np}) ensure that the dummy connection request finishes last in each coflow. Constraints (\ref{eq:rate_np}) calculate the bandwidth of each path $p^i_j$ for every time interval (which is the fraction of flow size delivered, divided by interval length). Constraints (\ref{eq:flow_req_t_np}) ensure that no capacity violation occurs; note that $P(e)$ denotes the set of connection requests whose paths use edge $e$. Finally, constraints (\ref{eq:circuit_flow_release_time_np}) ensure that release times are respected.
\[ \text{Minimize}\ \sum_i \sum_j \omega'_{ij} c^i_j \;\;\;\;\;\;\;\;\;\;subject\ to\]
\begin{align}
\sum_{\ell\leq L} x^i_{j\ell} & = 1 &\forall& i,j \\
\sum_{\ell\leq L} \tau_{\ell} x^i_{j\ell} &\leq c^i_j &\forall& i,j \label{eq:mass_np}\\
c^i_j & \leq\  c^i_0 &\forall& i,j\label{circ_prec_const_np}\\
b^i_{j\ell} &= \sigma^i_j x^i_{j\ell} / \tau_{\ell}& \forall& i,j,\ell \label{eq:rate_np}\\
\sum_{f^i_j\in P(e)} b^i_{j\ell}  & \leq c(e) & \forall& \ell,e \label{eq:flow_req_t_np}\\
r^i_j > \tau_{\ell+1} &\Rightarrow x^i_{j\ell} = 0 & \forall& i, \ell \label{eq:circuit_flow_release_time_np}\\
x^i_{j\ell} &\geq 0 & \forall& i, j, \ell, e
\end{align}

\begin{lemma}\label{lem:circuit_based_coflow_mutiple_LP_lower}
$\frac{1}{1+\epsilon}$ times the optimal value of the above linear program is a lower bound on the weighted sum of completion times for circuit-based coflows with paths and release times.
\end{lemma}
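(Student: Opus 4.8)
The plan is the standard ``turn a schedule into an LP solution'' argument: I would take an optimal coflow schedule and exhibit a feasible solution of the linear program whose objective value is at most $(1+\epsilon)$ times the schedule's cost. This gives $\mathrm{OPT}_{\mathrm{LP}}\le(1+\epsilon)\,\mathrm{OPT}$, i.e.\ $\tfrac{1}{1+\epsilon}\mathrm{OPT}_{\mathrm{LP}}\le\mathrm{OPT}$, which is the assertion.

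Fix an optimal schedule with bandwidth functions $\bar b^i_j(\cdot)$ (piecewise constant, by Lemma~\ref{lem:const_bandw}, if one prefers, though the argument below uses only the defining integral identity and the pointwise capacity inequality~(\ref{eq:flow_req})) and induced completion times $\bar c^i_j$ and $\bar C_i=\max_j \bar c^i_j$. For each genuine request $f^i_j$, let $\ell(i,j)$ be the interval index with $\bar c^i_j\in(\tau_{\ell(i,j)-1},\tau_{\ell(i,j)}]$ --- that is, round the completion time \emph{up} to the right endpoint of its interval. Set $\hat x^i_{j\ell}=1$ if $\ell=\ell(i,j)$ and $0$ otherwise, $\hat c^i_j=\tau_{\ell(i,j)}$, and $\hat b^i_{j\ell}=\sigma^i_j\hat x^i_{j\ell}/\tau_\ell$ as~(\ref{eq:rate_np}) forces. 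For the dummy flow $f^i_0$ (no size, on no edge), put $\ell(i,0)=\max_j\ell(i,j)$, $\hat x^i_{0\ell}=1$ at that index and $0$ elsewhere, $\hat c^i_0=\tau_{\ell(i,0)}$, and $\hat b^i_{0\ell}=0$.

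Then I would check the constraints one at a time. The assignment, nonnegativity and $(\ref{eq:mass_np})$ constraints are immediate since $\sum_\ell\tau_\ell\hat x^i_{j\ell}=\tau_{\ell(i,j)}=\hat c^i_j$; the precedence constraint $(\ref{circ_prec_const_np})$ holds because $\hat c^i_j=\tau_{\ell(i,j)}\le\tau_{\ell(i,0)}=\hat c^i_0$ by the choice of $\ell(i,0)$; and the release constraint $(\ref{eq:circuit_flow_release_time_np})$ holds because $\hat x^i_{j\ell}$ is nonzero only at $\ell=\ell(i,j)$, where $\tau_{\ell(i,j)+1}>\tau_{\ell(i,j)}\ge\bar c^i_j\ge r^i_j$. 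The only constraint requiring real work is the capacity constraint $(\ref{eq:flow_req_t_np})$: fixing an edge $e$ and an index $\ell$, only requests with $\ell(i,j)=\ell$ contribute, and each such request has $\bar c^i_j\le\tau_\ell$, hence delivers its entire volume $\sigma^i_j$ inside $[0,\tau_\ell]$ in the optimal schedule. Integrating $(\ref{eq:flow_req})$ over $[0,\tau_\ell]$ then gives $\sum_{f^i_j\in P(e):\,\ell(i,j)=\ell}\sigma^i_j\le\tau_\ell\,c(e)$, and dividing by $\tau_\ell$ yields exactly $\sum_{f^i_j\in P(e)}\hat b^i_{j\ell}\le c(e)$. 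Finally, since $\omega'_{ij}=0$ for $j\neq0$, the objective of the constructed solution is $\sum_i\omega_i\hat c^i_0=\sum_i\omega_i\tau_{\ell(i,0)}$; choosing $j$ with $\ell(i,j)=\ell(i,0)$ gives $\tau_{\ell(i,0)}=(1+\epsilon)\tau_{\ell(i,0)-1}<(1+\epsilon)\bar c^i_j\le(1+\epsilon)\bar C_i$, and summing over $i$ gives $\mathrm{OPT}_{\mathrm{LP}}\le(1+\epsilon)\sum_i\omega_i\bar C_i=(1+\epsilon)\,\mathrm{OPT}$.

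I expect the one subtle point to be the \emph{direction} of the rounding in the definition of $\ell(i,j)$. If one instead sent $f^i_j$ to the interval \emph{containing} $\bar c^i_j$ (so that the request is only guaranteed finished by $\tau_{\ell+1}$), the averaging argument above would produce $\sum\sigma^i_j\le\tau_{\ell+1}c(e)=(1+\epsilon)\tau_\ell c(e)$ and the capacity constraint would be violated by a factor $1+\epsilon$; rounding completion times up to $\tau_\ell$ is precisely what guarantees every request assigned to interval $\ell$ is fully transmitted by time $\tau_\ell$ in the optimum, making feasibility exact and pushing the whole $(1+\epsilon)$ loss into the objective (via the geometric spacing). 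The only remaining loose end is the routine bookkeeping for the base interval $[0,1]$ and the $\tau_0=0$ convention in~(\ref{eq:rate_np}) and~(\ref{eq:mass_np}) --- e.g.\ a harmless rescaling of the interval grid so that all optimal completion times fall in intervals of index at least $1$.
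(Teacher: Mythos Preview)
Your argument is correct and takes a genuinely different route from the paper's. The paper reads $x^i_{j\ell}$ as the actual \emph{fraction} of $f^i_j$ delivered in interval $\ell$ (so the constructed LP solution is fractional, not $0/1$), sets $b^i_{j\ell}=\int_{\tau_\ell}^{\tau_{\ell+1}} b^i_j(t)\,dt\big/\tau_\ell$, and invokes Lemma~\ref{lem:const_bandw} to justify constant per-interval rates; its $(1+\epsilon)$ loss enters through a different door --- pushing each release time forward to the next interval boundary so that the averaging lemma applies over whole intervals. Your construction instead places each request integrally at a single index (the smallest $\ell$ with $\tau_\ell\ge\bar c^i_j$) and verifies capacity by the direct volume bound
\[
\sum_{\substack{f^i_j\in P(e)\\ \ell(i,j)=\ell}}\sigma^i_j\ \le\ \int_0^{\tau_\ell}\sum_{f^i_j\in P(e)} b^i_j(t)\,dt\ \le\ \tau_\ell\,c(e),
\]
so Lemma~\ref{lem:const_bandw} and release-time manipulations are never needed; the entire $(1+\epsilon)$ loss is absorbed by the geometric rounding of completion times. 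Your route is a bit more elementary and yields an integral feasible LP point as a bonus; the paper's route stays closer to the LP's intended ``portion per interval'' semantics, which makes the subsequent rounding step (rescaling fractional mass before the $\alpha$-interval) read more naturally.
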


\begin{proof}
Consider a schedule $\langle p^i_j, b^i_j(t) \rangle$ achieving average coflow completion time $\hat{C}$. We show that it corresponds to a feasible solution to the linear program with an objective value $\leq (1+\epsilon)\hat{C}$. 
Start by setting the variables $x^i_{j\ell}$ for each connection request $f^i_j$ according to the interval in which they finish. Next, note that release times are arbitrary; hence, a connection request may arrive in the middle of an interval. On the other hand, constraints (\ref{eq:rate_np}) require constant bandwidth in each interval. To remedy this, move all release times to the end of the interval in which they occur and update $x^i_{j\ell}$ if necessary; this clearly increases $\hat{C}$ by no more than a factor of $1+\epsilon$.

Finally, for each interval $(\tau_{\ell}, \tau_{\ell+1}]$, apply Lemma \ref{lem:const_bandw} to get constant bandwidth for each connection request. More precisely, we set
\[ b^i_{j\ell} = \frac{\int_{\tau_{\ell}}^{\tau_{\ell+1}} b^i_j(t)dt}{\tau_{\ell}}\]
for each $f^i_j$. Clearly, constraints (\ref{eq:rate_np}) are satisfied; furthermore, the lemma ensures that constraints (\ref{eq:flow_req_t_np}) stay satisfied as well. This setting of variables satisfies all the remaining constraints as well.
\end{proof}

{\bf Rounding}.
The next step after solving the linear program is to construct a schedule. Let $\{\hat{x}^i_{j\ell}\}_{ij\ell}$ denote an optimal solution to the LP. Set $\hat{b}^i_{j\ell}$ according to equation (\ref{eq:rate_np}).
For some $0<\alpha\leq 1$, define the $\alpha$-interval $h^{\alpha}_{ij}$ of a connection request $f^i_j$ to be the interval $(\tau_{h^{\alpha}_{ij}},\tau_{h^{\alpha}_{ij}+1}]$ which contains the earliest point in time that a cumulative $\alpha$-fraction of the connection request has been completed; that is
\[h^{\alpha}_{ij} = \min \left\{\ell : \sum_{t=0}^{\ell} x^i_{jt} \geq \alpha \right\}.\]
The exact value of $\alpha$ will be determined later.

In the schedule, each connection request $f^i_j$ will run entirely in $(\tau_{h^{\alpha}_{ij}+D}, \tau_{h^{\alpha}_{ij}+D+1}]$--that is, the $D^{\textnormal{th}}$ interval after its $\alpha$-interval. The parameter $D\in \mathbb{N}$, the {\em displacement factor}, will be determined later. Note that the length of this interval is $(1+\epsilon)^{D-1}$ times the cumulative length of intervals from the beginning to the end of the $\alpha$-interval.

Fix a $k$. Let $S[k]$ denote the set of connection requests scheduled to run in $(\tau_k, \tau_{k+1}]$. Clearly, the $\alpha$-interval of all these connection requests is $k-D$. Consider a connection request $f^i_j \in S[k]$. According to the optimal solution obtained by solving the LP, its bandwidth function is
\[
     \hat{b}^i_j(t)= 
\begin{cases}
    \hat{b}^i_{j0},		& \text{if } t \in [0,1],\\
    \hat{b}^i_{j1},     & \text{if } t \in (\tau_1, \tau_2],\\
    \;\,\vdots			&\;\;\;\,\vdots\\
    \hat{b}^i_{jk}, & \text{if } t \in (\tau_{k-D},\tau_{k-D+1}].
\end{cases}
\]
We now define a new constant bandwidth
\begin{equation}
\tilde{b}^i_j= \sum_{\ell=0}^{\ell=k-D} \frac{\alpha^{-1}\hat{b}^i_{j\ell}\tau_{\ell}}{\tau_{k}}= \sum_{\ell=0}^{\ell=k-D} \frac{\alpha^{-1}\hat{b}^i_{j\ell}}{(1+\epsilon)^{k-\ell}}.
\end{equation}
Observe that we have scaled and added up each bandwidth value depending on how far its interval is from the $k^{\textnormal{th}}$ interval. 
For each connection request in $S[k]$, at least an $\alpha$ fraction of its size $\sigma^i_j$ was delivered by the end of the $\alpha$-interval. But now, with the new bandwidth, the entire flow is delivered over the stretched time line. More precisely,
\[ \forall f^i_j \in S[k] : \int_{0}^{\tau_{h^{\alpha}_{ij}+1}} \hat{b}^i_j(t) dt = \int_{0}^{\tau_{k-D+1}} \hat{b}^i_j(t) dt = \alpha\tilde{b}^i_j \tau_{k}.\]
This also means that the new bandwidth and the displacement factor have to satisfy
\begin{equation}\label{eq:displacement}
D\geq \Big\lceil \log_{1+\epsilon}\frac{1}{\alpha} \Big\rceil + 1.
\end{equation}
Furthermore, we must ensure that no capacity violation occurs. In the schedule, for every $k$ and for every edge $e$,
\begin{align*}
\sum_{f^i_j\in P(e)} \tilde{b}^i_j &=\sum_{f^i_j\in P(e)}\sum_{\ell=0}^{\ell=k-D} \frac{\alpha^{-1}\hat{b}^i_{j\ell}}{(1+\epsilon)^{k-\ell}}\\
&= \sum_{\ell=0}^{\ell=k-D} \frac{\alpha^{-1}}{(1+\epsilon)^{k-\ell}} \sum_{f^i_j\in P(e)} \hat{b}^i_{j\ell}\\
&\leq c(e) \sum_{\ell=0}^{\ell=k-D} \frac{\alpha^{-1}}{(1+\epsilon)^{k-\ell}},
\end{align*}
where the inequality follows from constraints (\ref{eq:flow_req_t_np}). Thus, we require that
$ \sum_{\ell=0}^{\ell=k-D} \frac{\alpha^{-1}}{(1+\epsilon)^{k-\ell}} \leq 1$, for all $k$. Hence,
\begin{equation}\label{eq:capacity_constr}
\frac{1}{\epsilon(1+\epsilon)^{D-1}} \leq \alpha.
\end{equation}
Finally, note that the blow up factor in completion time for each connection request in $S[k]$, taking into account the $(1+\epsilon)$-factor loss in the LP, is at most
\begin{equation}\label{eq:apprx_ratio}
\frac{(1+\epsilon)\tau_{k+1}}{(1-\alpha)\tau_{k-D}}=
\frac{(1+\epsilon)^{D+2}}{1-\alpha}.
\end{equation}
Numerically minimizing (\ref{eq:apprx_ratio}) subject to (\ref{eq:capacity_constr}) and displacement factor constraint  (\ref{eq:displacement}), we get an approximation factor of $17.5319$ for $\alpha = 0.5$, $D=3$, and $\epsilon\approx 0.5436$.

\subsection{Paths Are Not Given}\label{sec:circuit_based_paths_not_given}
In this section, we consider circuit-based coflow scheduling when paths are not given. Our $O(\frac{\log |E|}{\log\log |E|})$ approximation ratio for this case is asymptotically tight since an $\Omega(\frac{\log |E|}{\log\log |E|})$ hardness follows from hardness of congestion minimization on directed graphs \cite{DBLP:conf/stoc/ChuzhoyGKT07}. Recall that, in the congestion minimization problem, given a directed graph and source-sink pairs, the goal is to connect each pair by a path such that congestion, the maximum number of paths crossing an edge, is minimized. But, regarding the source-sink pairs as connection requests of unit size in a single coflow, the congestion is equivalent to makespan, where all edge capacities are set to 1. So, essentially, congestion minimization can be seen as a special case of our problem. 

We now go over our algorithm (see Algorithm \ref{algorithm}).

\begin{algorithm} \label{algorithm}
 \SetNoFillComment \SetKwInOut{Input}{input}
 \SetKwInOut{Output}{output}
 \SetKwInOut{Return}{return}
 \SetKwFor{ForEach}{foreach}{do}{\nl end}
  \Input{Coflows $\mathcal{F}$, Network $(G,c)$}
  \nl Construct the LP $\mathcal{L}$\;
  \nl Solve $\mathcal{L}$; perform rounding and scaling to get flow values $x^f_e$ and completion times $c_f$\;
  \nl\ForEach{flow $f \in \mathcal{F}$}{
   \nl PathSet[$f$] = FlowDecomposition($f$)\;
   \nl Path[$f$] = Rounding(PathSet[$f$])\;}
   \Return{Flow paths and ordering based on $c_f$}
  \caption{{\sc Circuit-based coflow scheduling}}
 \end{algorithm}


%
%
%
{\bf Reformulation}.
Identical to \S\ref{sec:circuit_based_paths_given}.

{\bf The linear program}.
The linear program is similar to \S\ref{sec:circuit_based_paths_given}, where we have set $\epsilon=1$.
Notice that constraints (\ref{eq:flow_req_s})-(\ref{eq:flow_req_t}) are the usual flow constraints per time interval. 
\begin{align}
\text{Minimize}\ &\sum_i \sum_j \omega'_{ij} c^i_j & subject\ to\nonumber \\
\sum_{\ell\leq L} x^i_{j\ell} & = 1 &\forall& i,j \\
\sum_{\ell\leq L} \tau_{\ell} x^i_{j\ell} &\leq c^i_j &\forall& i,j \label{eq:mass}\\
c^i_j & \leq\  c^i_0 &\forall& i,j\label{circ_prec_const}\\
\sum_{\substack{e\in N(v) \\ v\not\in \{s^i_j,d^i_j\}}} x^{e}_{ij\ell} &= 0& \forall& i,j,\ell \label{eq:flow_req_s}\\
\sum_{e\in N(d^i_j)} x^{e}_{ij\ell} &= \sigma^i_j x^i_{j\ell} / \tau_{\ell}& \forall& i,j,\ell \label{eq:rate}\\
\sum_{e\in N(s^i_j)} x^{e}_{ij\ell} &= -\sum_{e\in N(d^i_j)} x^{e}_{ij\ell}& \forall& i,j,\ell\\
\sum_{i} \sum_{j} x^{e}_{ij\ell} & \leq c(e) & \forall& \ell,e \label{eq:flow_req_t}\\
r^i_j > \tau_{\ell+1} &\Rightarrow x^i_{j\ell} = 0 & \forall& i, j, \ell \label{eq:circuit_flow_release_time}\\
x_{ij\ell}^e, x^i_{j\ell} &\geq 0 & \forall& i, j, \ell, e
\end{align}

Next, we show that the optimal value of the linear program gives us a relaxed lower bound.

\begin{lemma}\label{lem:circ_path_not_given_lp_lower_bound}
Half the optimal value of the above linear program provides a lower bound on the weighted sum of completion times for circuit-based coflows with release times.
\end{lemma}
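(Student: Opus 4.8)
The plan is to mirror the proof of Lemma~\ref{lem:circuit_based_coflow_mutiple_LP_lower} with $\epsilon=1$, showing that any feasible schedule induces an LP-feasible solution whose objective is at most twice the schedule's weighted completion time. Starting from a schedule specified by paths $p^i_j$ and bandwidth functions $b^i_j(t)$, I would first set the interval variables $x^i_{j\ell}$ to record the fraction of $f^i_j$ delivered during $(\tau_\ell,\tau_{\ell+1}]$, and set $c^i_j$ (and $c^i_0=C_i$) according to the schedule's completion times. Constraint~(\ref{eq:mass}) holds because $\tau_\ell$ is a lower bound on time within the $\ell$-th interval, and the precedence constraints~(\ref{circ_prec_const}) hold by definition of $c^i_0$.

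Next I would handle the arbitrary release times exactly as before: push each release time $r^i_j$ to the right end of the interval containing it and update the $x^i_{j\ell}$ accordingly. Since consecutive interval endpoints differ by a factor of $1+\epsilon=2$, this shift inflates every completion time by at most a factor of $2$. The remaining work is to produce the per-interval edge-flow variables $x^e_{ij\ell}$ satisfying the flow-conservation constraints (\ref{eq:flow_req_s}), (\ref{eq:rate}), the source-sink balance constraint, and capacity constraint (\ref{eq:flow_req_t}). For a fixed interval $(\tau_\ell,\tau_{\ell+1}]$, apply Lemma~\ref{lem:const_bandw} to replace each $b^i_j(t)$ by the constant bandwidth $\bar b^i_{j\ell}=\bigl(\int_{\tau_\ell}^{\tau_{\ell+1}}b^i_j(t)\,dt\bigr)/\tau_\ell$ over that interval; the lemma guarantees no capacity violation. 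Then route this constant bandwidth along the schedule's path $p^i_j$, setting $x^e_{ij\ell}=\bar b^i_{j\ell}$ for each edge $e$ on $p^i_j$ and $0$ otherwise. This single-path routing automatically satisfies flow conservation at internal nodes and the source-sink balance equation, and constraint~(\ref{eq:rate}) holds since the amount delivered into $d^i_j$ during the interval is $\sigma^i_j x^i_{j\ell}$ spread over length $\tau_\ell$ after the release-time rounding. The capacity constraint~(\ref{eq:flow_req_t}) follows directly from the conclusion of Lemma~\ref{lem:const_bandw}. Finally, constraints~(\ref{eq:circuit_flow_release_time}) hold because after the rounding no flow is assigned to an interval ending before its (shifted) release time.

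Putting it together, the constructed $\langle x^e_{ij\ell}, x^i_{j\ell}, c^i_j\rangle$ is LP-feasible, and its objective $\sum_i\sum_j \omega'_{ij}c^i_j$ is at most $2$ times the weighted coflow completion time of the original schedule (the only loss being the release-time rounding, by a factor $1+\epsilon=2$); the constant-bandwidth replacement of Lemma~\ref{lem:const_bandw} preserves completion times within each interval and hence does not further inflate $c^i_j$. Since this holds for an optimal schedule, the LP optimum is at most $2\,\overline{C}^{*}$, i.e.\ half the LP optimum lower-bounds $\overline{C}^{*}$, as claimed. I expect the only mild subtlety to be bookkeeping around the release-time shift — making sure that after moving $r^i_j$ forward, the updated $x^i_{j\ell}$ still sum to $1$ and that mass previously delivered in the now-excluded interval is reassigned to a later interval without overshooting capacity — but this is exactly the argument used in Lemma~\ref{lem:circuit_based_coflow_mutiple_LP_lower} and carries over verbatim with $\epsilon=1$.
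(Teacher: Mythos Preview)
Your proposal is correct and follows essentially the same approach as the paper's own proof, which simply says to mirror Lemma~\ref{lem:circuit_based_coflow_mutiple_LP_lower} with $\epsilon=1$ and then set the edge-flow variables $x^e_{ij\ell}$ by routing the per-interval constant bandwidths along the given paths $p^i_j$. Your version is more detailed than the paper's (which is quite terse), but the structure---release-time rounding, Lemma~\ref{lem:const_bandw} for constant bandwidths, single-path assignment of the $x^e_{ij\ell}$---is identical.
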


\begin{proof}
Following the proof of Lemma \ref{lem:circuit_based_coflow_mutiple_LP_lower}, move release times if necessary, set the variables $x^i_{j\ell}$ and calculate bandwidths per interval. Next, set the value of flow variables $x^e_{ij\ell}$ based on the paths $p^i_j$ and the calculated bandwidths. This setting of variables satisfies all constraints.
\end{proof}

{\bf Rounding}.
This step involves scaling followed by flow decomposition and randomized rounding. Let $\langle \hat{x}_{ij\ell}^e, \hat{x}^i_{\ell} \rangle$ denote an optimal solution to the LP. Analogous to the previous case in \S\ref{sec:circuit_based_paths_given}, define the half interval $h^i_j$ of each connection request $f^i_j$. All connection requests whose half interval is $H$ will be scheduled to run entirely in interval $(\tau_{H+3}, \tau_{H+4}]$. In analogy with \S\ref{sec:circuit_based_paths_given}, set $\alpha=\frac{1}{2}$ and $D=3$. Now, fix a $k$ and let $S[k]$ denote the set of connection requests scheduled to run in $(\tau_k, \tau_{k+1}]$. Perform scaling and adding up of flows for every edge to get updated flow variables
\begin{equation} 
\tilde{x}^e_{ij} = \sum_{\ell=0}^{\ell=k-3} \frac{\hat{x}_{ij\ell}^e}{2^{k-\ell-1}}.
\end{equation}
The calculations in \S\ref{sec:circuit_based_paths_given} apply here as well (where we use inequalities (\ref{eq:flow_req_t}) instead of (\ref{eq:flow_req_t_np})). Hence, the entire size of every connection request is delivered in the new settings and all edge capacities are respected.
%
%
The next step is to apply the well-known flow decomposition theorem (see e.g. \cite{Ahuja:1993:NFT:137406}) to the flow variables $\tilde{x}_{ij}^e$ to get a set of flow paths $P_{ij}=\{p^i_{j}\}$ from $s^i_j$ to $d^i_j$ for each connection request $f^i_j\in S[k]$.

At this point, after computing the flow paths, we have for each $k$ a set of connection requests $S[k]$ expressed in terms of flow paths. Unfortunately, this is not enough since each connection request must have exactly one path from source to sink. What we have, instead, is a \emph{set} of flow paths $P_{ij}=\{p^i_{j1},...,p^i_{jn_{ij}}\}$ for each connection request $f^i_j$. This brings us to the final step.

Let $b^i_j$ denote the total amount of flow in $P_{ij}$;
\[ b^i_j = \sum_{k\in [n_{ij}]} |p^i_{jk}|. \]
We apply the randomized rounding technique of Raghavan and Thompson \cite{raghavan+t:round} to choose a single path from $P_{ij}$: we randomly choose a flow path from $P_{ij}$ where the probability of $p^i_{jk}$ being selected is $|p^i_{jk}|/b^i_j$. All of the flow in $P_{ij}$ will go over the chosen path.
Let $\bar{b}_e$ denote the total flow passing through $e$ after the rounding. Using a Chernoff-Hoeffding bound, we can show that for every $e\in E$ and every $\delta > 0$:
\junk{for any $0< \epsilon <1$,
\[\Pr[\max_e \frac{\bar{b}_e}{c(e)} \geq (1+\delta)] \leq \epsilon \]
where $\delta=\frac{\epsilon\log|E|}{\log\log|E|}$.

All of this is pretty standard. However, for the sake of completeness, we give more details. Note that, it follows from the Chernoff-Hoeffding bound that, for every edge $e\in E$ and every $\delta > 0$:}
\begin{equation*}
\Pr\left[\bar{b}_e \geq (1+\delta) c(e)\right] \leq \Big( \frac{\mathbf{e^{\delta}}}{(1+\delta)^{1+\delta}} \Big)^{c(e)/|E|}.
\end{equation*}
Hence,
\begin{align*}
\Pr\left[\max_e \frac{\bar{b}_e}{c(e)} \geq (1+\delta)\right] &\leq \sum_e \Pr\left[\frac{\bar{b}_e}{c(e)} \geq (1+\delta)\right]\\
&\leq \sum_e \Big( \frac{\mathbf{e^{\delta}}}{(1+\delta)^{1+\delta}} \Big)^{c(e)/|E|}\\
&\leq |E| \Big( \frac{\mathbf{e^{\delta}}}{(1+\delta)^{1+\delta}} \Big)^{(\min_e c(e))/|E|}.
\end{align*}
If we set $\delta = O(\frac{\log |E|}{\log\log |E|})$, we obtain that the bandwidth used exceeds the capacity of any edge by at most $O(\frac{\log |E|}{\log\log |E|})$ with probability at most $1/n^{c}$ for some constant $c > 0$.  
Consequently, by scaling down the bandwidth allocated by $1 + \delta$, and scaling up in time by the same factor, we obtain a feasible set of connection requests with a blowup in completion times at most $O(\frac{\log |E|}{\log\log |E|})$.


\section{Packet-based coflows}\label{sec:packet_based}
In this model, a flow corresponds to a {\em single} packet (flow size is 1) and a coflow is a collection of packets.
The schedule is based on discrete time steps. During a step, each packet can either stay at a node (in a queue) or move along an edge with the restriction that at most one packet can use the edge at a time. On the hardness side, packet scheduling can not be approximated within $6/5-\epsilon$ , for all $\epsilon >0$, unless $\mathbf{P}=\mathbf{NP}$ \cite{DBLP:conf/waoa/PeisSW09}. This is true even if paths are given.
%
In \S\ref{sec:packet_based_paths_given}, we observe that when paths are given, we can apply known results for job shop scheduling. Next, in \S\ref{sec:packet_based_paths_not_given}, we consider the case where paths are not given. We present an $O(1)$-approximation algorithm for this case.  

\subsection{Paths Are Given}\label{sec:packet_based_paths_given}
When a path $p^i_j$ is given for each packet $f^i_j$, the only remaining problem is scheduling. It is well-known that the problem of scheduling packets on a store-and-forward network with fixed paths is an instance of job-shop scheduling. Specifically, each packet $f^i_j$ can be regarded as a job which needs to be executed on a number of machines (edges) in a prescribed order (given by $p^i_j$) such that a machine can process one job at a time. Consequently, the packet-based coflow problem with given paths, is an instance of shop scheduling $J |r_j, p_{ij}=1| \sum_S \omega_S C_S$ with unit processing times and a \emph{generalized} min-sum objective; hence, the algorithm of \cite{Queyranne02approximationalgorithms} immediately gives an $O(1)$-approximation. 

\begin{theorem}[\cite{Queyranne02approximationalgorithms}]
\label{t:qs}
There exists a polynomial-time $O(1)$-approximation algorithm for the (generalized) min-sum job-shop problem $J(P) |r_j, p_{ij}=1| \sum_j \omega_j C_j$.
\end{theorem}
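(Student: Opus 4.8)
The plan is to reduce the generalized min-sum unit-time job-shop problem to a sequence of pure \emph{makespan} subproblems, each of which is an ordinary unit-time job shop (equivalently, a packet-routing instance), and to solve each subproblem with the $O(\mathrm{congestion}+\mathrm{dilation})$ scheduling theorem of Leighton, Maggs and Rao~\cite{Leighton94packetrouting,DBLP:journals/combinatorica/LeightonMR99}. The glue between the two is the interval-indexed LP framework used throughout this paper. Concretely, cut the time line at $\tau_\ell=2^\ell$, $\ell\in\{0,\dots,L\}$, and introduce indicator variables $x_{j\ell}$ for ``job $j$ finishes in $(\tau_\ell,\tau_{\ell+1}]$'', a completion-time variable $C_j=\sum_\ell \tau_\ell x_{j\ell}$, and, for the generalized objective $\sum_S\omega_S C_S$ with $C_S=\max_{j\in S}C_j$, auxiliary variables $C_S$ with $C_S\ge C_j$ for every $j\in S$ (the depth-$1$ in-tree reformulation of \S\ref{sec:packet_based_paths_given}, with an explicitly given family of sets $S$). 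The LP also carries, for each machine $M$ and each $\ell$, a \emph{congestion} constraint bounding $\sum_j n_{jM}\sum_{t\le\ell}x_{jt}$ by $\tau_{\ell+1}$, where $n_{jM}$ is the number of operations of $j$ on $M$; a \emph{feasibility} constraint $x_{j\ell}=0$ whenever $\mu_j+r_j>\tau_{\ell+1}$, where $\mu_j=\sum_M n_{jM}$ is the dilation of $j$; and the mass constraint $\sum_\ell x_{j\ell}=1$. Every integral schedule induces a feasible LP point of the same objective value, so the LP optimum lower-bounds $\mathrm{OPT}$.

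The next step is the usual filtering / $\alpha$-point rounding: solve the LP and assign each job $j$ to the interval $h_j=\min\{\ell:\sum_{t\le\ell}x_{jt}\ge 1/2\}$, truncating the tail and rescaling so that each surviving $x_{j\ell}$ grows by at most a factor $2$. Two bounds then hold for the set $S[\ell]$ of jobs assigned to interval $\ell$. First, the feasibility constraint forces the smallest $\ell'$ with $x_{j\ell'}>0$ to satisfy $\tau_{\ell'+1}\ge\mu_j$, and since $h_j\ge\ell'$ we get $\mu_j\le\tau_{h_j+1}=2\tau_{h_j}$, so every job in $S[\ell]$ has dilation $O(\tau_\ell)$. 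Second, since jobs assigned to intervals $\le\ell$ have cumulative mass $\ge1/2$, the congestion constraints give $\sum_{j:h_j\le\ell}n_{jM}\le 2\tau_{\ell+1}$ for every machine $M$. Hence $S[\ell]$ induces a unit-time job-shop instance with congestion and dilation both $O(\tau_\ell)$, for which~\cite{DBLP:journals/combinatorica/LeightonMR99} computes, in polynomial time, a schedule of makespan $O(\tau_\ell)$.

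Finally, concatenate these schedules: run the block for $S[0]$, then $S[1]$, and so on. Because $\sum_{\ell'<\ell}O(\tau_{\ell'})=O(\tau_\ell)$, the block for interval $\ell$ starts and ends at time $O(\tau_\ell)$, so every $j\in S[\ell]$ completes by $O(\tau_\ell)=O(\tau_{h_j})$. A short calculation like the one in \S\ref{sec:circuit_based_paths_given} gives $\tau_{h_j}\le 2\,C_j$ for the LP value $C_j$, so each completion time inflates by only a constant factor; consequently each $C_S=\max_{j\in S}C_j$, and therefore $\sum_S\omega_S C_S$, inflates by at most a constant factor over the LP optimum, which is at most $\mathrm{OPT}$. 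Release times are respected because a job in $S[\ell]$ has $r_j\le\tau_{h_j+1}=O(\tau_\ell)$, which is absorbed into the same constant.

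The main obstacle is the makespan subroutine together with the matching per-interval \emph{dilation} bound. The congestion bound is immediate from LP constraints of the type already used in this paper, but without the extra feasibility constraint $\mu_j+r_j>\tau_{\ell+1}\Rightarrow x_{j\ell}=0$ a single job with many operations could be fractionally assigned to an early interval and then, run in full inside a short block, break the geometric telescoping; ensuring that both the congestion and the dilation bounds survive the filtering step is the delicate part. Equally essential is that the $O(\mathrm{congestion}+\mathrm{dilation})$ schedule be produced by a \emph{polynomial-time} algorithm and handle job-shop instances in which a job revisits a machine (a packet traversing an edge more than once); this is exactly what the constructive Leighton--Maggs--Rao theorem~\cite{DBLP:journals/combinatorica/LeightonMR99} supplies, and invoking it is what makes the whole scheme simultaneously constructive and constant-factor.
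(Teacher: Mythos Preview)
The paper does not prove this theorem; it is quoted verbatim from~\cite{Queyranne02approximationalgorithms} and used as a black box, so there is no in-paper proof to compare against. What you have written is, in fact, a faithful sketch of the argument in that reference specialized to unit processing times: an interval-indexed LP relaxation, $\alpha$-point filtering to bucket jobs by their half-intervals, a per-bucket bound of $O(\tau_\ell)$ on both congestion and dilation, a makespan subroutine to schedule each bucket in $O(\tau_\ell)$ time, and geometric concatenation. Your handling of the generalized objective via the depth-$1$ in-tree (dummy ``set'' variables with $C_S\ge C_j$) matches both the cited paper and the reformulation used throughout this one.

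Two small remarks. First, Queyranne and Sviridenko state their result for general $p_{ij}$ and invoke the Shmoys--Stein--Wein job-shop makespan algorithm as the per-interval subroutine; your substitution of the constructive Leighton--Maggs--Rao theorem is legitimate precisely because of the unit-time hypothesis $p_{ij}=1$, which makes each bucket a packet-routing instance, and LMR does allow a packet to traverse an edge multiple times. Second, your release-time argument is slightly informal: to make it airtight, start bucket $\ell$ at time $\max\{\text{end of bucket }\ell-1,\ \max_{j\in S[\ell]} r_j\}$; both quantities are $O(\tau_\ell)$, so the telescoping and the final constant are unaffected.
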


\subsection{Paths Are Not Given}\label{sec:packet_based_paths_not_given}
This is the more general case where one needs to both route and schedule packets. We give a polynomial-time $O(1)$-approximation algorithm for this problem. 

An ingredient in our solution is the notion of \emph{time-expanded graphs} first introduced by Ford and Fulkerson \cite{doi:10.1287/opre.6.3.419}. In more detail, given a (directed) graph $G=(V,E)$ and $n$ coflows $\{F_i=\{f^i_1,..., f^i_{n_i}\}\}_{i\in [n]}$, we construct a time-expanded graph $G^T$ as follows. Each node in the time-expanded graph is labeled by a tuple $(v, t)$ where $v$ is a node in $G$ and $0\leq t\leq T$ is the timestamp. Therefore, the edge set of $G^T$ is 
\begin{align*}
E(G^T)=\ &\{((u,t), (v,t+1)) : (u,v) \in E(G),\ 0\leq t\leq T-1\}\\
&\cup \{((v, t), (v,t+1)) : v \in V(G),\ 0\leq t\leq T-1\}
\end{align*}
where the second group of edges are called queue edges and their role is to simulate packets waiting for one or more rounds at a node (see Figure \ref{fig:time_exp_graph}). 
Let us denote by $E'$ the set $E(G^T)$ excluding the queue edges.

The overall idea of the algorithm is as follows. By a net-flow we mean a network flow in the ordinary sense. We express each packet as multiple net-flows to the packet's destination in different times in $G^T$; this will become clear shortly. The remaining steps involve writing an interval-indexed linear program, assigning packets to intervals, and performing routing and scheduling packets for each interval.

\begin{figure}
\begin{center}
\begin{tikzpicture}[scale=0.85]
\tikzstyle{every node}=[font=\small]

\node[draw,circle,fill=orange] (s) at (4,2.5) {};
\draw (4.25,2.5) node[right] {$s^i_j$};
\node[draw,circle] (a) at (3,3.5) {};
\node[draw,circle] (b) at (5,3.5) {};
\draw (5.25,3.5) node[right] {$\mathbf{G}$};
\node[draw,circle,fill=cyan] (t) at (4,4.5) {};
\draw (4.25,4.5) node[right] {$d^i_j$};
\draw [->,>=latex] (s)--(a);
\draw [->,>=latex] (a)--(t);
\draw [->,>=latex] (t)--(b);
\draw [<-,>=latex] (a)--(b);

\node[draw,circle,fill=orange] (s0) at (1,-1) {};
\draw (1.25,-1) node[right] {$(s^i_j,0)$};
\node[draw,circle] (a0) at (0,0) {};
\node[draw,circle] (b0) at (2,0) {};
\node[draw,circle] (t0) at (1,1) {};
\draw (1.25,1) node[right] {$(d^i_j,0)$};

\node[draw,circle] (s1) at (4,-1) {};
\draw (4.25,-1) node[right] {$(s^i_j,1)$};
\node[draw,circle] (a1) at (3,0) {};
\node[draw,circle] (b1) at (5,0) {};
\node[draw,circle,fill=cyan] (t1) at (4,1) {};
\draw (4.25,1) node[right] {$(d^i_j,1)$};

\node[draw, circle] (s2) at (7,-1) {};
\draw (7.25,-1) node[right] {$(s^i_j,2)$};
\node[draw, circle] (a2) at (6,0) {};
\node[draw, circle] (b2) at (8,0) {};
\node[draw,circle,fill=cyan] (t2) at (7,1) {};
\draw (7.25,1) node[right] {$(d^i_j,2)$};

\draw [->,>=latex] (s0)--(a1);
\draw [->,>=latex] (a0)--(t1);
\draw [->,>=latex] (t0)--(b1);
\draw [->,>=latex] (b0)--(a1);
\draw [->,>=latex] (s1)--(a2);
\draw [->,>=latex] (a1)--(t2);
\draw [->,>=latex] (t1)--(b2);
\draw [->,>=latex] (b1)--(a2);

\draw[dashed,->] (s0) to [bend left] (s1);
\draw[dashed,->] (t0) to [bend left] (t1);
\draw[dashed,->] (a0) to [bend left] (a1);
\draw[dashed,->] (b0) to [bend left] (b1);

\draw[dashed,->] (s1) to [bend left] (s2);
\draw[dashed,->] (t1) to [bend left] (t2);
\draw[dashed,->] (a1) to [bend left] (a2);
\draw[dashed,->] (b1) to [bend left] (b2);

\end{tikzpicture}
\end{center}
\caption{An example graph $G$ (above) and its time-expanded version $G^T$ for $T=2$ (below). Packet $f^i_j$ needs to be routed from node $s^i_j$ to node $d^i_j$ in $G$. Corresponding to $f^i_j$, flows of combined size 1 are sent from $(s^i_j,0)$ to $(d^i_j,k)$ for $k=1, 2$. Dashed lines correspond to queue edges.}\label{fig:time_exp_graph}
\end{figure}

{\bf Reformulation}.
Consider a packet $f^i_j \in F_i$ to be routed from $s^i_j$ to $d^i_j$ in $G$.  Corresponding to this packet, we send a set of net-flows $f^t_{ij}$ in $G^T$: net-flow $f^t_{ij}$ has node $(s^i_j, r^i_j)$ as source (demand is -1) and node $(d^i_j, t)$ as destination where $t\in \{r_i+1, ..., T\}$ and sum of demands of these nodes is 1 (the exact values will be determined by solving the linear program below). Also, we set $T$ to a sufficiently large integer (say $|E|\sum_{i\in [n]} n_i$). 
As before, we introduce dummy flows $f^i_0$ and update weights $\omega'_{ij}$ accordingly.

{\bf The linear program}.
Let $T'=\log T$, $\tau_0=1$ and $\tau_\ell = 2^{\ell -1}$ for $\ell \in [T']$. Let $c^i_j$ denote the earliest time when a net-flow $f^t_{ij}$ in $G^T$ reaches its destination; that is $c^i_j=\min_t c^t_{ij}$ where $c^t_{ij}$ is the completion time of $f^t_{ij}$. 
\begin{align}
\text{Minimize}\ \sum_{i\in [n]}&\sum_{j = 0}^{n_i} \omega'_{ij} c^i_j &subject\ to\;\,& \nonumber \\
\sum_{\tau_{\ell} < t \leq \tau_{\ell+1}}\sum_{e \in E((d_{ij},t))} x_{e}^{f^t_{ij}} & = f_{ij\ell} &\forall i,j,\ell \label{eq:flow_LP_s} \\
\sum_{\ell\leq T'} \tau_{\ell} f_{ij\ell} &\leq c_{ij} & \forall i, j \label{eq:cij}\\
c_{ij} & \leq c_{i0} & \forall i,j \label{prec_const2}\\
\sum_{i\in [n]} \sum_{j=0}^{n_i}\sum_{t\leq \tau_{\ell+1}} x_{e}^{f^t_{ij}} & \leq \tau_{\ell+1} & \forall \ell,e \label{eq:flow_congestion} \\
\sum_{e\in E'} \sum_{t\leq \tau_{\ell+1}} x_{e}^{f^t_{ij}} & \leq \tau_{\ell+1} & \forall i,j,\ell \label{eq:flow_dilation} \\
\sum_{t\in T} b^t_{ij} &= 1 & \forall i, j \\
\mathcal{N}^{f^t_{ij}} x^{f^t_{ij}} &= b^{f^t_{ij}} & \forall i,j \label{eq:flow_reqF} \\
0 \leq x_{e}^{f^t_{ij}} &\leq 1 & \forall i , j, t, e \label{eq:flow_LP_e}
\end{align}

Variables $x_{e}^{f^t_{ij}}$ denote the amount of flow corresponding to $f_{ij}^t$ over edge $e\in E(G^T)$. In constraints (\ref{eq:flow_LP_s}), $f_{ij\ell}$ is the total amount of flow, corresponding to net-flows $f^t_{ij}$, that enter their destination nodes $(d^i_j, t)$ for $t\in (\tau_{\ell}, \tau_{\ell+1}]$. By $E((d_{ij},t))$ we mean the set of incoming edges to $(d_{ij},t)$ in constraints (\ref{eq:flow_LP_s}). Moreover, $b^t_{ij}$ is the demand of node $(d^i_j, t)$ and $\mathcal{N}^{f^t_{ij}} x^{f^t_{ij}} = b^{f^t_{ij}}$ specify the obvious requirements of sending a net-flow of size $b^t_{ij}$ from $(s^i_j,r_i)$ to $(d^i_j, t)$ for $r_i\leq t\leq T$, as well as flow conservation at nodes other than source and sink. Constraints (\ref{eq:flow_congestion}) and (\ref{eq:flow_dilation}) restrict respectively the congestion and dilation of packets which reach their destination in interval $[0, \tau_{\ell+1}]$. As can be seen, queue edges don't contribute to path dilation.

\begin{lemma}
The optimal value of the linear program (\ref{eq:flow_LP_s})-(\ref{eq:flow_LP_e}) with weights $\omega'_{i0}=\omega_i$ and $\omega'_{ij}=0$, $j \neq 0$, is a lower bound on the optimal weighted sum of packet-based coflow completion times.
\end{lemma}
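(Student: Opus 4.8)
The plan is to follow the template of Lemma~\ref{lem:circuit_based_coflow_mutiple_LP_lower}: fix an optimal schedule for the packet-based coflow instance and exhibit a feasible solution to the linear program~(\ref{eq:flow_LP_s})--(\ref{eq:flow_LP_e}) whose objective value equals the weighted coflow completion time of that schedule. Since the program is a minimization, this shows its optimum is at most the optimal schedule cost, i.e.\ a lower bound. In contrast to the circuit-based lemmas there is no loss factor, because time is already discrete and the release times $r_i$ are integers, so no interval-boundary rounding is needed. A preliminary observation: in an optimal schedule every packet arrives within the horizon $T=|E|\sum_i n_i$ (we may assume each packet follows a simple path, and a standard congestion-plus-dilation argument bounds the time by which everything is delivered), so the net-flows $f^t_{ij}$ with $t\le T$ introduced in the reformulation do suffice.

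Given such a schedule, each packet $f^i_j$ leaves $s^i_j$ at time $\ge r_i$, follows a walk in $G$ with possible waits, and reaches $d^i_j$ at an integer time $a^i_j$; the coflow completion time is $C_i=\max_j a^i_j$. Lift the trajectory of $f^i_j$ to a unit directed path $\pi^i_j$ in $G^T$ from $(s^i_j,r_i)$ to $(d^i_j,a^i_j)$, where a move uses a transit edge of $E'$ and a wait uses a queue edge. Now set $f^{a^i_j}_{ij}$ to carry one unit along $\pi^i_j$ and all other net-flows $f^t_{ij}$, $t\ne a^i_j$, to zero; concretely $x_e^{f^t_{ij}}=1$ for $e\in\pi^i_j$ when $t=a^i_j$ and $0$ otherwise, $b^t_{ij}=1$ iff $t=a^i_j$, hence $f_{ij\ell}=1$ exactly when $a^i_j\in(\tau_\ell,\tau_{\ell+1}]$. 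Finally put $c_{ij}=a^i_j$ for $j\ne 0$ and $c_{i0}=C_i$, with the dummy flow carrying no net-flow.

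Then I would verify the constraints one by one. Flow conservation, the demand equations (\ref{eq:flow_reqF}) and (\ref{eq:flow_LP_s}), $\sum_t b^t_{ij}=1$, and $0\le x_e^{f^t_{ij}}\le1$ in (\ref{eq:flow_LP_e}) all hold because $\pi^i_j$ is a single unit path with the prescribed endpoints. The precedence constraints (\ref{prec_const2}) hold since $c_{ij}=a^i_j\le\max_{j'}a^i_{j'}=c_{i0}$. For (\ref{eq:cij}): $\sum_\ell\tau_\ell f_{ij\ell}=\tau_{\ell^\ast}$ with $\ell^\ast$ the interval of $a^i_j$, and $\tau_{\ell^\ast}<a^i_j=c_{ij}$. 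For the dilation bound (\ref{eq:flow_dilation}): for fixed $i,j$ the only nonzero term is at $t=a^i_j$, and there $\sum_{e\in E'}x_e^{f^{a^i_j}_{ij}}$ equals the number of transit edges of $\pi^i_j$, i.e.\ the number of moves the packet makes, which is at most $a^i_j-r_i\le a^i_j$; so the sum is $\le\tau_{\ell+1}$ when $a^i_j\le\tau_{\ell+1}$ and is $0$ otherwise. The congestion bound (\ref{eq:flow_congestion}) is the point needing care: read as a bound over the $\le\tau_{\ell+1}$ time layers of an underlying edge of $G$ up to time $\tau_{\ell+1}$, it holds because a valid store-and-forward schedule moves at most one packet across each edge of $G$ per step, and queue edges do not enter this count (nor the dilation count, as noted after the LP). With all constraints satisfied, the objective is $\sum_i\sum_j\omega'_{ij}c_{ij}=\sum_i\omega_i c_{i0}=\sum_i\omega_i C_i=\overline{C}$, the schedule's cost; minimizing over schedules yields the claimed lower bound.

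The main obstacle is interpreting the congestion constraint (\ref{eq:flow_congestion}) precisely --- whether the index $e$ ranges over $E(G)$ or over $E(G^T)$ --- and confirming that queue edges, which arbitrarily many waiting packets may traverse simultaneously, do not cause a violation; under the reading above this reduces to the one-packet-per-edge-per-step property of a legal schedule. The only other nonroutine ingredient is the horizon argument that an optimal schedule finishes within $T$ steps.
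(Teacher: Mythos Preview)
Your proposal is correct and follows essentially the same approach as the paper: take any feasible schedule, lift each packet's trajectory to a unit path in the time-expanded graph $G^T$ (using queue edges for waits and transit edges for moves), set the LP variables to these integral paths, and verify the constraints. The paper's proof is terser on the constraint checks (it simply asserts that congestion and dilation are bounded by $\tau_{\ell+1}$ because packets finishing by that time cannot exceed it), while you spell out each verification and flag the ambiguity in whether $e$ in~(\ref{eq:flow_congestion}) ranges over $E(G)$ or $E(G^T)$; the paper implicitly uses the projected reading, consistent with the collapsed LP~(\ref{eq:flow_LP2_s})--(\ref{eq:flow_LP2_e}), so your interpretation matches.
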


\begin{proof}
Given a set $S=\langle \{p^i_j\}, \{q^i_j\} \rangle_{i\in [n], j\in [n_i]}$ of paths $\{p^i_j\}$ and schedules $\{q^i_j\}$, we show that $S$ constitutes a feasible solution to the LP. Without loss of generality, let $g$ denote the maximum amount of time any packet has to wait in the schedule; the existence of $g\in O(1)$ follows from \cite{Leighton94packetrouting} (note that $g$ is the maximum queue size.) Each packet $f^i_j$ is routed, on a path of length $|p^i_j|$, from $s^i_j \in V$ to $d^i_j \in V$ according to the schedule $\{q^i_j\}$ which specifies the amount of time $\in [0, g]$ the packet has to wait before crossing the next edge on its path. Let $T^i_j$ denote the time step when $f^i_j$ reaches its sink. Furthermore, we define $T=\max_{ij} T^i_j$.

For each packet $f^i_j$, we construct an integral flow $f^{T_{ij}}_{ij}$ in the time-expanded graph $G^T$. Consider a packet $f^i_j$ with release time $r^i_j$ in $G$ going from $s^i_j$ to $d^i_j$ on a path specified by $p^i_j$ according to the schedule $q^i_j$ and arriving at time $T^i_j$. The corresponding flow, $f^{T_{ij}}_{ij}$, in $G^T$ from $(s^i_j,r^i_j)$ to $(d^i_j, T^i_j)$ is constructed in steps. We keep a current time step $t$ which is initially set to $r^i_j$. For each edge $e=(u,v)$ on the path $p^i_j$, starting from $s^i_j$, the flow $f^i_j$ follows the edge(s)
\[
((u,t), (u,t+1)) \rightarrow  \cdots \rightarrow  ((u,t+h), (v,t+h+1)),
\]
in $G^T$ where $h$ is the delay specified by $q^i_j$ before crossing the edge. Next, we update $t\leftarrow t+1$ and repeat until we reach $d^i_j$. Clearly, $T^i_j$ is the sum of the path length $|p^i_j|$ and total amount of delay prescribed by the schedule $q^i_j$.

It follows from the construction that constraints (\ref{eq:flow_reqF}) are satisfied. Furthermore, since the schedule does not allow more than one packet to cross an edge at the same time, the path followed by a flow is exclusive to that flow except, possibly, the edges $((u,t),(u,t+1))$ which effectively simulate queues. At this point, we can set the value of variables $x_{e}^{f^t_{ij}}$ according to the flow $f^{T^i_j}_{ij}$. We also set $c^i_j=T^i_j$ and $c^i_0=\max_{j\in [n_i]} c^i_j$. It is easy to verify that constraints (\ref{eq:flow_LP_s})-(\ref{prec_const2}) are satisfied. Regarding congestion constraints (\ref{eq:flow_congestion}), observe that, for all $\ell$, packets that have reached their destination by time $\tau_{\ell+1}$, can not cause an edge to have a congestion value more than $\tau_{\ell+1}$. The same argument applies to dilation constraints (\ref{eq:flow_dilation}).
\end{proof}

{\bf Rounding}.
At a high-level, this step consists  of two parts (a) Assign the packets to time intervals; (b) Route and schedule packets in each interval by applying the algorithm of Srinivasan and Teo \cite{TeoSrinv}.

Let $\hat{S}=\langle \hat{x}_{e}^{f^t_{ij}}, \hat{f}_{ij\ell}, \hat{c}^i_j \rangle$ denote an optimal solution to the LP (\ref{eq:flow_LP_s})-(\ref{eq:flow_LP_e}). We start by filtering the solution and assigning packets to intervals. Specifically, define the half-interval of a packet $f^i_j$ to be the interval $(\tau_{h^i_j}, \tau_{h^i_j+1}]$ such that $h^i_j=\min \big\{\ell : \sum_{t < \ell} \hat{f}_{ijt} \geq \frac{1}{2} \big\}$;
then, set
\[
     \bar{f}_{ij\ell}= 
\begin{cases}
    \hat{f}_{ij\ell}/(1-\sum_{t\geq\ell} \hat{f}_{ijt}),& \text{if } \tau_{\ell+1} \leq h^i_j,\\
    0,              & \text{if } \tau_{\ell+1} \geq h^i_j.
\end{cases}
\]
Also, set the other variables $\bar{x}_{e}^{f^t_{ij}}$ and $\bar{c}^i_j$ according to (\ref{eq:flow_LP_s}) and (\ref{eq:cij}) respectively. Clearly, the blowup in each variable and the right hand side of (\ref{eq:flow_congestion})-(\ref{eq:flow_dilation}) is at most 2.

Next, we assign each packet $f^i_j$ to its half-interval. Let $P[\ell]$ denote the set of packets that have been assigned to the interval $(\tau_{\ell}, \tau_{\ell+1}]$. At this stage, we go over all time intervals sequentially and, for each one, route and schedule all packets in it. Fix an $\ell$ and consider all packets $f^i_j \in P[\ell]$. Collapse the portion of the time-expanded graph $G^T$, corresponding to the times $t\leq \tau_{\ell+1}$, back to $G$ by combining nodes and edges which differ only in time stamp and by removing queue edges altogether.

Observe that the removal of queue edges will not cause any problem since any amount of flow passing through them will have left the queue by the end of the last interval. 
Now, the filtered solution $\bar{S}=\langle \bar{x}_{e}^{f^t_{ij}}, \bar{f}_{ij\ell}, \bar{c}^i_j \rangle$ satisfies
\begin{align}
\sum_{i, j : f^i_j \in F[\ell]} x_e^{f_{ij}} & \leq \tau_{\ell+2} & \forall& e \in E \label{eq:flow_LP2_s}\\
\sum_{e\in E} x_e^{f_{ij}} & \leq \tau_{\ell+2} & \forall& f^i_j \in P[\ell]\\
\bar{\mathcal{N}}^{f_{ij}} x^{f_{ij}} &= \bar{b}^{f_{ij}} & \forall& f^i_j \in P[\ell] \\
0 \leq x_e^{f_{ij}} &\leq 1 & \forall& f^i_j \in P[\ell]\ \forall e \in E \label{eq:flow_LP2_e}
\end{align}
where the constraints $\bar{\mathcal{N}}^{f_{ij}} x^{f_{ij}} = \bar{b}^{f_{ij}}$ express the flow requirements in the collapsed graph. Formally,
\begin{lemma}\label{lem:collapse}
For every $\ell \in [T']$, the set of flows $\bar{x}_e^{f_{ij}}$ corresponding to packets $f^i_j$ in $P[\ell]$ satisfy the LP in (\ref{eq:flow_LP2_s})-(\ref{eq:flow_LP2_e}) with respect to the (collapsed) graph $G$.
\end{lemma}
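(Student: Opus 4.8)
The plan is to produce the collapsed flow explicitly, as the time-projection of the \emph{filtered} time-expanded flow $\bar x_e^{f^t_{ij}}$, and then to check the four families of constraints in (\ref{eq:flow_LP2_s})--(\ref{eq:flow_LP2_e}) one at a time. Fix $\ell$ and, for each packet $f^i_j\in P[\ell]$, define for every physical edge $e=(u,v)\in E(G)$
\[
\bar x_e^{f_{ij}}\;=\;\sum_{t}\ \sum_{s\le\tau_{\ell+1}}\ \bar x_{((u,s),(v,s+1))}^{f^{t}_{ij}},
\]
where the outer sum is over the net-flow targets $t$ that survive filtering; by the choice of the half-interval all such $t$ satisfy $t\le\tau_{\ell+1}$, so the retained net-flows live entirely within time layers $\le\tau_{\ell+1}$ and the restriction $s\le\tau_{\ell+1}$ is vacuous. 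Two facts from the filtering step will be used repeatedly: the surviving demands sum to $1$ (i.e.\ $\sum_t b^t_{ij}=1$), and every variable as well as the right-hand sides of (\ref{eq:flow_congestion})--(\ref{eq:flow_dilation}) grew by a factor of at most $2$; since $\tau_{\ell+2}=2\tau_{\ell+1}$, that factor-$2$ loss is exactly what upgrades a bound of $\tau_{\ell+1}$ to $\tau_{\ell+2}$.

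First I would verify flow conservation and the unit demand, i.e.\ the constraints $\bar{\mathcal N}^{f_{ij}}x^{f_{ij}}=\bar b^{f_{ij}}$. Summing the node-balance equations (\ref{eq:flow_reqF}) over all time layers at a fixed internal node $v$, each queue edge $((v,s),(v,s+1))$ contributes once positively (as an out-edge of $(v,s)$) and once negatively (as an in-edge of $(v,s+1)$), so the queue-edge terms telescope away and what remains is exactly conservation at $v$ in $G$ for the projected edges. At $s^i_j$ the net outflow equals $\sum_t b^t_{ij}=1$ and at $d^i_j$ the net inflow equals $1$, which is the required unit demand; deleting queue edges is harmless because, as already observed, any flow left on a queue edge has exited it by time $\tau_{\ell+1}$.

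Next I would handle congestion and dilation. For a fixed $e\in E$, the sum of $\bar x_e^{f_{ij}}$ over the packets assigned to interval $\ell$ is bounded above by the same quantity summed over \emph{all} packets and net-flow targets, which is precisely the left-hand side of the filtered congestion constraint (\ref{eq:flow_congestion}) at parameter $\ell$ (read as a bound on the total flow carried by all time-copies of $e$ up to time $\tau_{\ell+1}$), hence at most $2\tau_{\ell+1}=\tau_{\ell+2}$; this gives (\ref{eq:flow_LP2_s}). Similarly $\sum_{e\in E}\bar x_e^{f_{ij}}$ collects only the non-queue edges $E'$, so it equals the left-hand side of the filtered dilation constraint (\ref{eq:flow_dilation}), hence at most $2\tau_{\ell+1}=\tau_{\ell+2}$, which is the dilation constraint.

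The delicate point is the capacity bound $0\le\bar x_e^{f_{ij}}\le 1$ of (\ref{eq:flow_LP2_e}). Nonnegativity is immediate, but the upper bound is \emph{not} automatic: a single time-expanded net-flow may project onto a walk in $G$ that uses the same physical edge in several distinct time layers, so $\bar x_e^{f_{ij}}$ could a priori exceed $1$. I would repair this by cancelling circulations: decompose the collapsed $s^i_j$--$d^i_j$ flow (which has value $1$) and discard its cyclic part, keeping only the acyclic remainder. An acyclic flow of value $1$ between two nodes decomposes into $s^i_j$--$d^i_j$ paths whose values sum to $1$, so each edge carries at most $1$. Crucially, cancelling a circulation decreases $\bar x_e^{f_{ij}}$ on every edge and alters no node's net balance, so the conservation/demand constraint is untouched and the congestion and dilation bounds are only improved, which completes the verification. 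I expect this last step to be the main obstacle: one must notice that the time-expansion can inflate per-edge values on the way back to $G$, and then argue that the cycle-cancellation fix is monotone on every edge and thus disturbs none of the other constraints; everything else is index bookkeeping — matching the time-expanded variables $x_e^{f^t_{ij}}$ and the aggregates $f_{ij\ell}$ against their collapsed counterparts, and tracking the factor-$2$ from filtering.
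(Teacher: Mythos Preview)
The paper does not actually give a proof of this lemma: it is stated immediately after the informal discussion of filtering and collapsing, and the text then moves directly on to invoking the Srinivasan--Teo theorem. The implicit justification in the paper is exactly the outline you follow --- project the filtered time-expanded flow onto $G$, use the telescoping of queue edges for conservation, and use the factor-$2$ blowup from filtering to pass from $\tau_{\ell+1}$ to $\tau_{\ell+2}$ in the congestion and dilation bounds. So on those points your argument and the paper's (sketched) argument coincide.

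Where you go further than the paper is the per-edge upper bound $\bar x_e^{f_{ij}}\le 1$ in (\ref{eq:flow_LP2_e}). You correctly observe that collapsing time layers can send several time-copies of the same physical edge onto one edge of $G$, so the naive projection may exceed $1$; the paper does not mention this at all. Your repair by cycle cancellation is sound: removing circulations from a unit $s^i_j$--$d^i_j$ flow is edgewise monotone nonincreasing, preserves node balance, and leaves an acyclic unit flow whose path decomposition puts at most $1$ on every edge; hence (\ref{eq:flow_LP2_e}) holds while (\ref{eq:flow_LP2_s}), the dilation bound, and the demand constraints are only improved or unchanged. This is a genuine detail the paper glosses over, and your fix is the natural one.
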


This LP is exactly the one considered in \cite{TeoSrinv} and by applying their theorem, we can route and schedule all packets in $P[\ell]$ such that the last packet arrives in time $O(\tau_{\ell+2})$.

\begin{theorem}[\cite{TeoSrinv}]
There are constants $c', c'' >0$ such that the following holds. For any packet routing problem on any network, there is a set of paths and a corresponding schedule that can be constructed in polynomial time such that the routing time is at most $c'$ times the optimal, and the maximum queue size at each edge is bounded by $c''$.
\end{theorem}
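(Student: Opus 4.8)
The plan is to reconstruct the three-stage argument of Srinivasan and Teo~\cite{TeoSrinv}: a lower bound via an LP relaxation, a rounding step producing integral paths of small congestion and dilation, and a scheduling step turning such a path system into a short schedule with constant queues. For the lower bound, let $\mathrm{OPT}$ be the optimal routing time. First I would record the classical observation that if an optimal solution routes packet $f^i_j$ along path $p^i_j$, then it needs at least $|p^i_j|$ steps, so the dilation $D=\max_{ij}|p^i_j|$ is at most $\mathrm{OPT}$; and since at most one packet crosses any edge per step, the congestion $C$ of that path system is at most $\mathrm{OPT}$. Hence the LP that, for a parameter $T$, routes one unit of fractional flow from $s^i_j$ to $d^i_j$ for each packet using only $s^i_j$--$d^i_j$ paths of length at most $T$ while keeping the total flow on every edge at most $T$, is feasible for $T=\mathrm{OPT}$. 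Solving it for $T=1,2,4,\dots$ identifies the least power of two $T^{*}$ for which it is feasible, so $T^{*}\le 2\,\mathrm{OPT}$, and the resulting fractional solution has congestion and dilation at most $T^{*}$.

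Next I would round. Decompose each packet's flow into a distribution over $s^i_j$--$d^i_j$ paths, each of length at most $T^{*}$, and select one path per packet. The dilation is then at most $T^{*}=O(\mathrm{OPT})$ with probability one. The expected load on any edge is at most $T^{*}$, so a Chernoff bound together with a union bound over the $|E|$ edges already yields load $O(T^{*}+\log|E|)$, which is $O(\mathrm{OPT})$ when $T^{*}=\Omega(\log|E|)$. To push the congestion to $O(T^{*})$ in every regime, I would invoke an algorithmic Lov\'asz Local Lemma argument in the spirit of the routing analysis of Leighton, Maggs and Rao~\cite{Leighton94packetrouting,DBLP:journals/combinatorica/LeightonMR99}: the event ``edge $e$ is overloaded'' depends only on the packets routing fractionally through $e$, and after arranging the LP and the rounding so that these dependencies have bounded degree, the Local Lemma produces integral paths with congestion $O(T^{*})$ and dilation $O(T^{*})$ -- both $O(\mathrm{OPT})$ -- in polynomial time.

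Finally, given a path system of congestion $C'=O(\mathrm{OPT})$ and dilation $D'=O(\mathrm{OPT})$, I would apply the constructive Leighton--Maggs--Rao scheduling theorem~\cite{DBLP:journals/combinatorica/LeightonMR99}: there is a polynomial-time-computable schedule of length $O(C'+D')=O(\mathrm{OPT})$ in which at most a constant number of packets ever wait in any edge queue. Combining the three stages gives a routing time of $c'\cdot\mathrm{OPT}$ and maximum queue size $c''$, which is the claim.

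The hard part is the congestion bound in the rounding stage: plain randomized rounding loses an additive $\Theta(\log|E|)$ term, so a genuine constant-factor guarantee forces one into the Local Lemma and a careful analysis (or reshaping) of the dependency structure of the overload events -- this is the technical heart of~\cite{TeoSrinv}. The Leighton--Maggs--Rao scheduling step is also substantial, but here it is used as a black box.
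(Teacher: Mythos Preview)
The paper does not prove this theorem at all: it is quoted verbatim from Srinivasan and Teo~\cite{TeoSrinv} and invoked as a black box in the rounding step of \S\ref{sec:packet_based_paths_not_given}. There is therefore no ``paper's own proof'' to compare your proposal against.

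As a sketch of the cited result your three-stage outline (LP lower bound giving fractional congestion and dilation at most $T^{*}\le 2\,\mathrm{OPT}$; rounding to integral paths with $O(\mathrm{OPT})$ congestion and dilation; then the constructive Leighton--Maggs--Rao schedule of length $O(C'+D')$ with $O(1)$ queues) is faithful to the structure of~\cite{TeoSrinv}. The one place where your write-up is too loose is the sentence ``after arranging the LP and the rounding so that these dependencies have bounded degree'': this is precisely the technical contribution of Srinivasan and Teo, and it is not achieved by vanilla path decomposition plus LLL. Their LP carries extra constraints (controlling, roughly, how much fractional mass any single commodity places on edges in any local neighborhood) that are what make the overload events have bounded dependence; without those constraints the dependency degree can be $\Theta(n)$ and the LLL does not apply. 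So your identification of the hard part is right, but the phrase ``arranging the LP'' hides a nontrivial construction rather than a routine step.
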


We do this for $P[1]$, $P[2]$ and so on. Note that the coflow release times are respected. The optimal completion time of a packet $f^i_j \in P[\ell]$ in the LP (\ref{eq:flow_LP_s})-(\ref{eq:flow_LP_e}) is $\hat{c}^i_j = O(\tau_{\ell+1})$. On the other hand, the completion time of the same packet in our algorithm is
\begin{equation}
\tilde{c}^i_j = \sum_{t\leq \ell} O(\tau_{t+1}) = \sum_{t\leq \ell} O(2^t) = O(2^{\ell}) = O(\tau_{\ell+1}).
\end{equation}
Consequently, the weighted completion time of our algorithm is within a $O(1)$ factor of the optimal solution.

\section{Experiments}\label{sec:experiments}
\noindent In this section, we describe some experiments conducted to evaluate the practical performance of our circuit-based scheduling algorithm in  \S\ref{sec:circuit_based_paths_not_given}.

\subsection{Methodology}\label{sec:expr_method}
We use a simulator to assess the performance of our algorithm in practice. Packet-level simulators are not suitable for this setting due to their high level of overhead \cite{Al-Fares:2010:HDF:1855711.1855730,rapier}. Therefore, like previous works \cite{Al-Fares:2010:HDF:1855711.1855730,rapier,ChowdhuryZhongStoica2014}, we developed a flow-based simulator. At a high level, the simulator is an event queue. Each flow corresponds to an event which happens at its release time. The simulator chooses the next flow based on the ordering prescribed by a scheduling algorithm or scheme. A second event occurs when a flow completes; at which time, its reserved bandwidth is released.

We use a 128-server Fat-Tree with 1Gbs links as the network topology. Due to the complexity of the linear program to be solved, simulations of large instances were prohibitively slow.
Each coflow instance is randomly generated with flow release times, flow sizes, and coflow weights based on Poisson distributions. Each result is the average of 10 tries.

\subsection{Implementation}
We implement Algorithm \ref{algorithm} in ~\S\ref{sec:circuit_based_paths_not_given} with some minor tweaks. In particular, to avoid wasting time and bandwidth, each flow starts as soon as it can (in the order prescribed by the linear program), as opposed to starting at the beginning of its half-interval. 

The path decomposition algorithm tries to minimize the number of paths per flow by finding the ``thickest'' paths; this is done using a well-known version of Dijkstra's shortest-path algorithm. 
We implemented the above algorithm in C++ using Lemon libraries and IBM CPLEX 12.6.3. to solve the linear program.

\subsection{Performance}
We measure the weighted sum of completion times. As in~\cite{rapier}, we compare our algorithm to the following schemes. 
\begin{itemize} 
\item Baseline: flows are routed and ordered randomly.
\item Schedule-only: flows are routed randomly; ordering is by minimum completion time which is computed as the ratio of flow size to path bandwidth.
\item Route-only: flows are routed for achieving good load balance and edge utilization; ordering is arbitrary.
\end{itemize}

\junk{Furthermore, we compare our results to the optimal value of the linear program (\ref{lp:first})-(\ref{lp:last}) which provides a lower bound on the total weighted completion time.}

We investigate the impact of two parameters. First is the number of flows in each coflow, which we refer to as {\em coflow width}. Second is the number of coflows.

\subsubsection*{Coflow width}
In this case, we fix the number of coflows to 10 and run experiments for coflow widths in $\{4,8,16,32\}$.
As illustrated in Figure~\ref{fig:changeCoflowWidth}, {\sc LP-Based} completes {\em every instance}\/ faster than the baseline, Schedule-only, and Route-only schedules.  On average, the improvement over the baseline, Schedule-only and Route-only is by $\%126$, $\%96$, and $\%22$, respectively.
\begin{figure}[h]
\begin{center}
\centering
\includegraphics[scale=0.3]{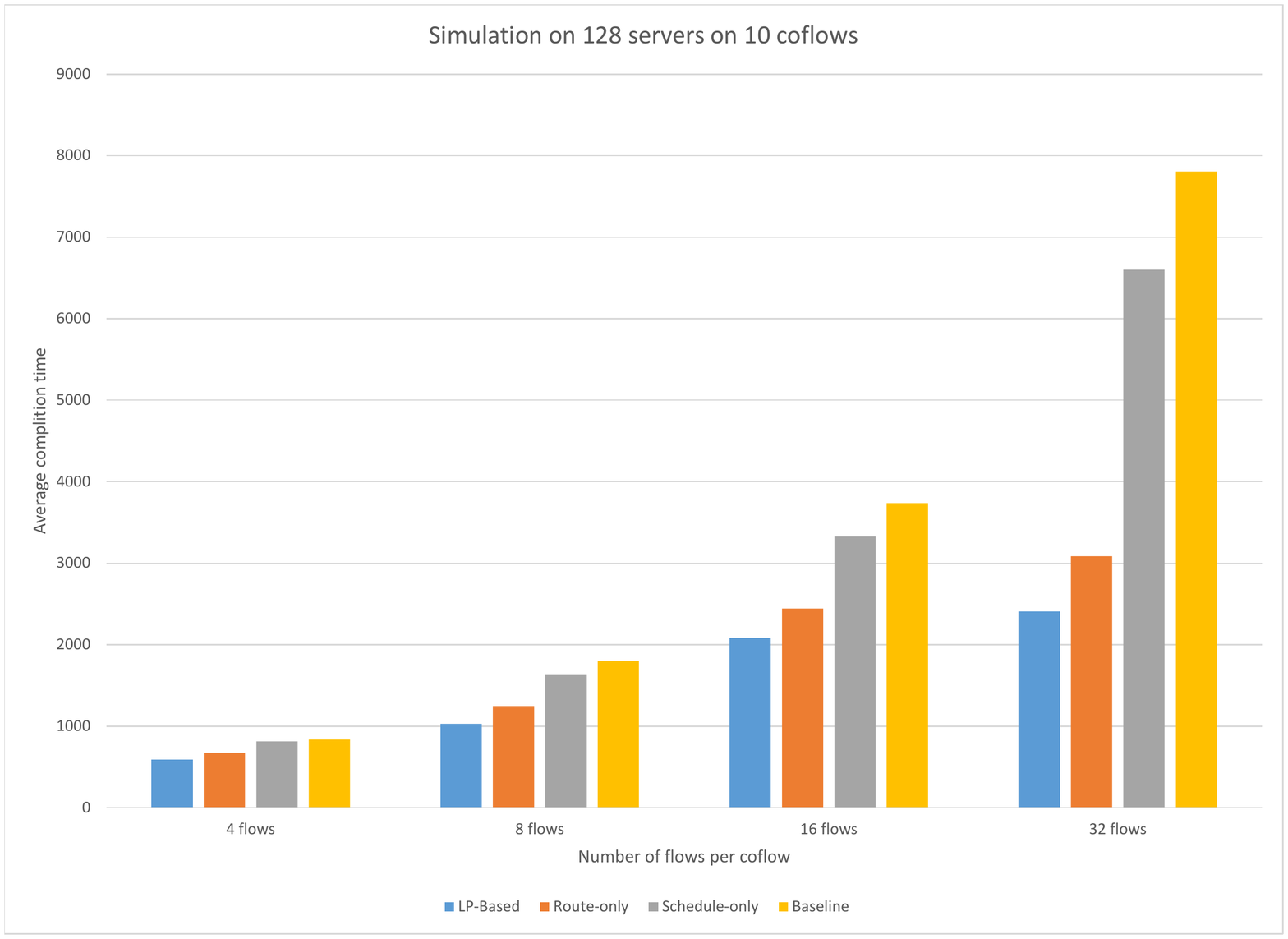}
\includegraphics[scale=0.3]{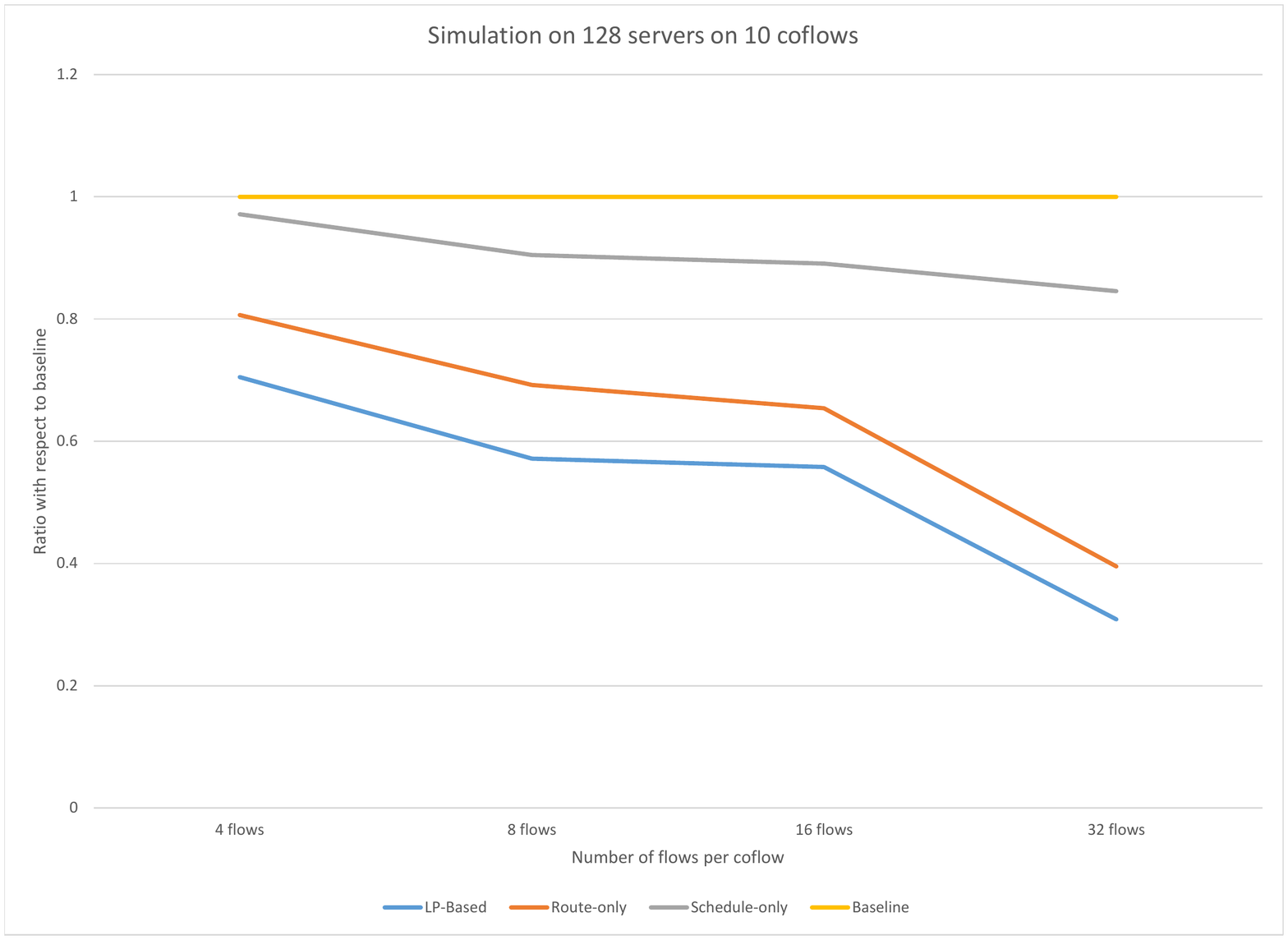}
\caption{ \label{fig:changeCoflowWidth} Changing the coflow width; the number of coflows is set to 10.
}
\end{center}
\end{figure}
\subsubsection*{Number of coflows} Using a fixed coflow width of 16, we vary the number of coflows from 10 to 25, in increments of 5.  The results are illustrated in Figure~\ref{fig:changeCoflowNum}. We find that again {\sc LP-Based} completes {\em every instance}\/ faster than the baseline as well as the Schedule-only and Route-only schedules.  On average, the improvement over the baseline, Schedule-only and Route-only is by factors of $\%110$, $\%72$, and $\%26$, respectively.  

\begin{figure}[h]
\begin{center}
\centering
\includegraphics[scale=0.3]{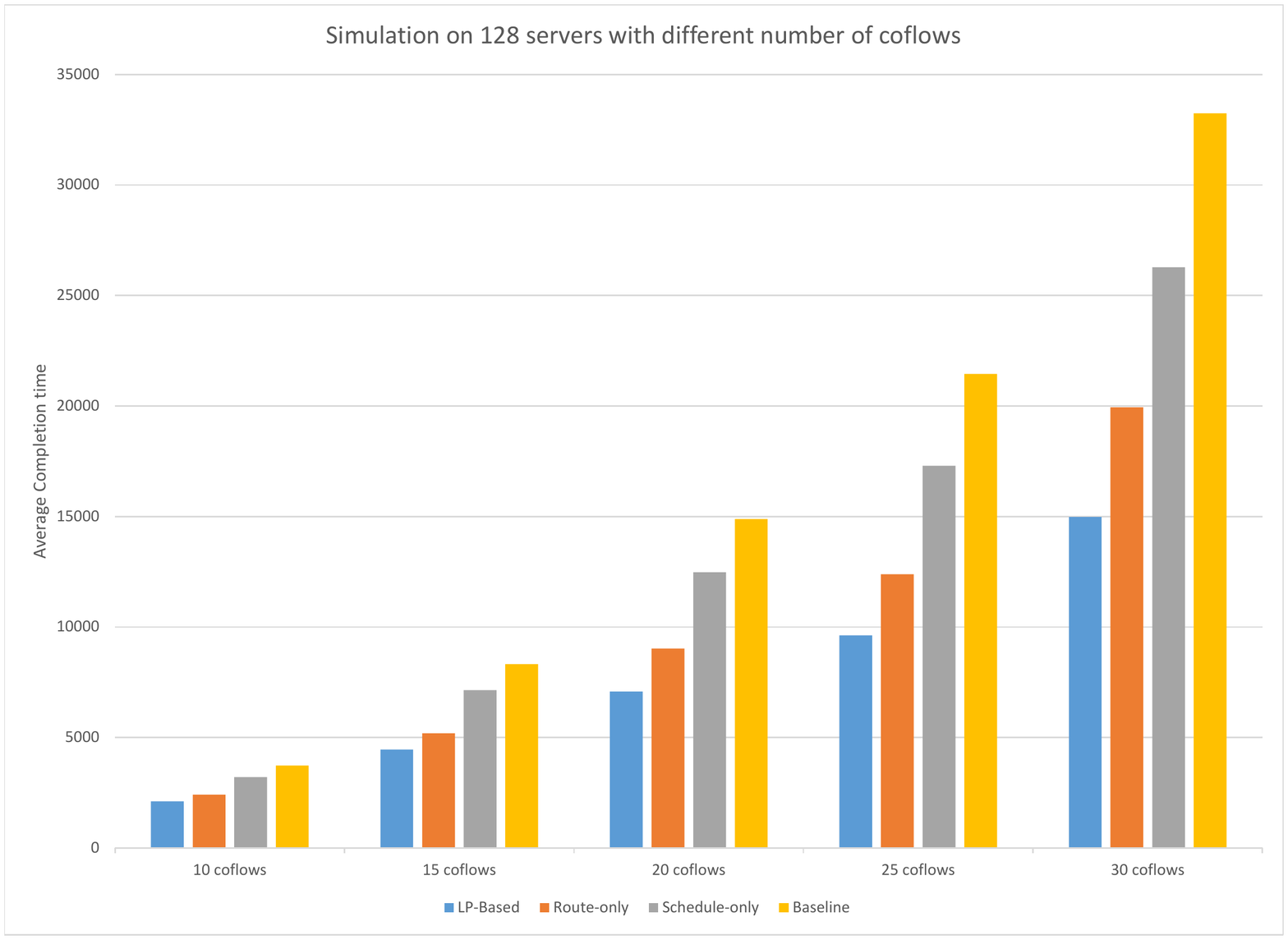}
\includegraphics[scale=0.3]{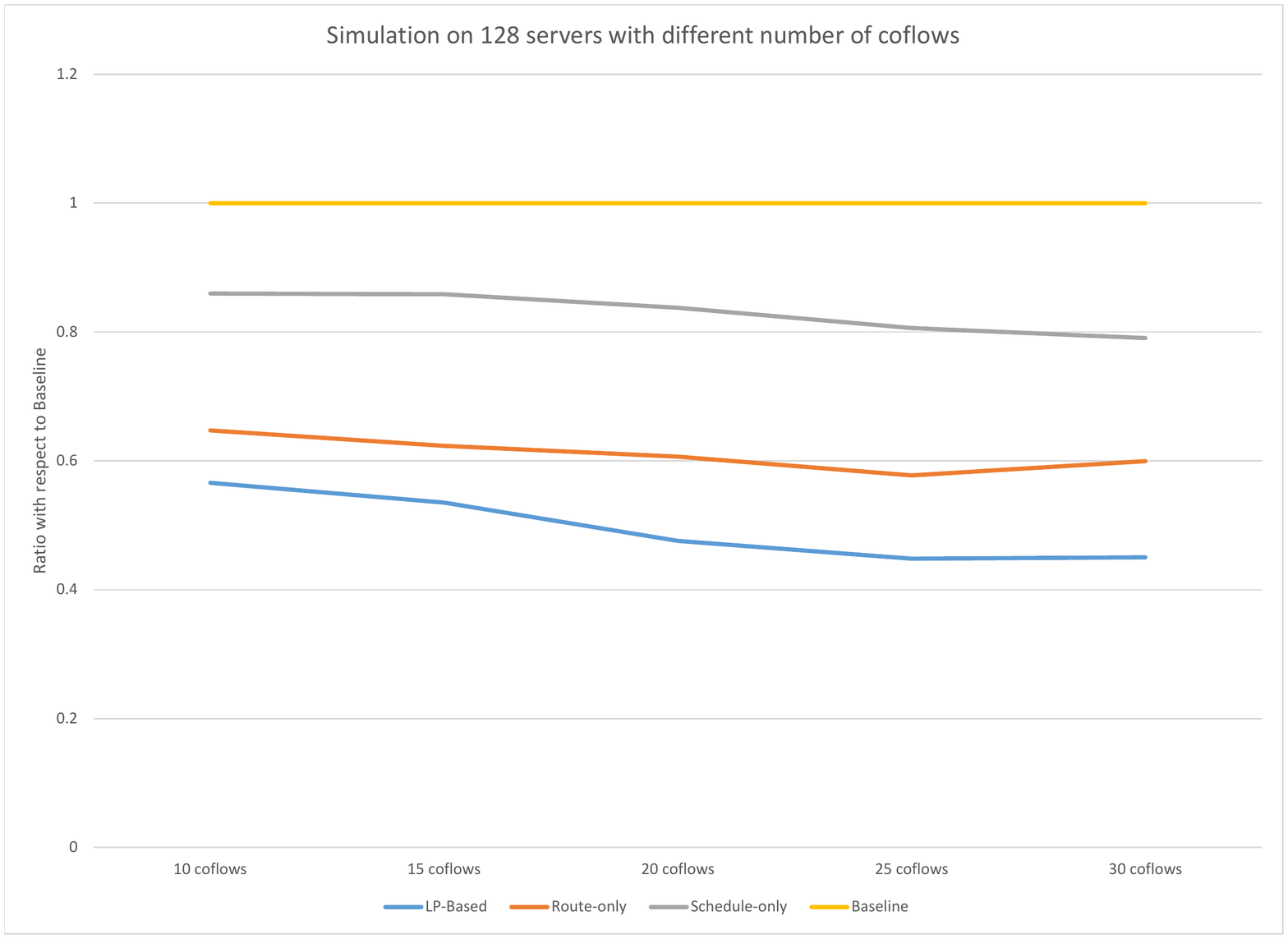}
\caption{ \label{fig:changeCoflowNum} Changing the number of coflows; coflow width is set to 16.
}
\end{center}
\end{figure}
We note that, as can be seen in the experiments, the worst-case approximation ratio of $O(\log |E| / \log\log |E|)$, does not happen in practice. The main reason is that in the fat-tree topology there are a few paths between pairs of servers; furthermore, uniform link bandwidth causes the path decomposition routine to return fewer flow paths. Indeed, in all of our experiments, the path decomposition routine returns one path per flow.
	
\section{Conclusion and Open Problems}\label{sec:open_problems}
We have presented algorithms for scheduling task-based coflows, packet-based coflows, and circuit-based coflows for general networks.  All of our algorithms achieve asymptotically-optimal total weighted completion time. An immediate line of further research is to obtain improved constant-factor bounds for general networks and improved algorithms for special network classes that are of particular interest to datacenter applications.  More broadly, we would like to pursue algorithms for other important objectives and coflow models.  Specifically, we are interested in designing approximation algorithms for the total weighted response time objective, where the response time is the difference between completion time and release time.  Another line of research is the design of schedules for coflows in optical networks.

\section{Acknowledgments}
We would like to thank the reviewers for their useful comments and pointing out the relevant work done in \cite{Correa2009}. 

\bibliographystyle{amsplain}
\bibliography{refs}

\end{document}